\let\IEEEproof\proof 
\let\IEEEendproof\endproof
\let\proof\@undefined
\let\endproof\@undefined
\newcommand*{\etc}{
    \@ifnextchar{.}
        {etc}
        {etc.\@\xspace}
}
\let\proof\IEEEproof
\let\endproof\IEEEendproof
\def\HG{H_{\text{G}}}
\def\HC{H_{\text{C}}}
\def\PCG{P_{\text{C}|\text{G}}}
\def\PGC{P_{\text{G}|\text{C}}}
 \newcommand{\HypGC}{\begin{array}{c} {\scriptstyle \HG} \\ \geq \\ < \\ {\scriptstyle \HC} \end{array}} %Hypotheses
\newcommand{\eps}{\varepsilon}
\def\aw{\mathrm{a}_\mathcal{W}}
\begin{document}
\title{Testing the Isotropic Cauchy Hypothesis} 

% %%% Single author, or several authors with same affiliation:
\author{%
 \IEEEauthorblockN{Jihad Fahs,~\IEEEmembership{Member,~IEEE,} Ibrahim Abou-Faycal~\IEEEmembership{Member,~IEEE,} and Ibrahim Issa~\IEEEmembership{Member,~IEEE}}\\
 \IEEEauthorblockA{Department of Electrical and Computer Engineering \\
                   American University of Beirut, Beirut, Lebanon \\
                   Emails: \{Jihad.Fahs, Ibrahim.Abou-Faycal, Ibrahim.Issa\}@aub.edu.lb}
}

  %\thanks{This work was supported by AUB's University
    %Research Board, and the Lebanese National Council for Scientific
    %Research (CNRS-L). Partial results were presented at the
    %2016 IEEE International Symposium on Information Theory.}}
    
%\markboth{IEEE Transactions on Communications,~Vol.~ , No.~ , Month~ }
%{Jihad Fahs, Ibrahim Abou-Faycal, 
%\MakeLowercase{\textit{et al.}}: Testing the Isotropic Cauchy Hypothesis}

\maketitle

%%%%%%%%%%%%%%%%%%%%%%%%%%%%%%%%%%%%%%%%%%%%%%%%%%%%%%%%%%%%%%%%%%%%%%%%%%%%%%%%

\begin{abstract}
Isotropic $\alpha$-stable distributions are central in the theory of heavy-tailed distributions and play a role similar to that of the Gaussian density among finite second-moment laws. Given a sequence of $n$ observations, we are interested in characterizing the performance of Likelihood Ratio Tests where two hypotheses are plausible for the observed quantities: either isotropic Cauchy or isotropic Gaussian. Under various setups, we show that the probability of error of such detectors is not always exponentially decaying with $n$ with the leading term in the exponent shown to be logarithmic instead and we determine the constants in that leading term. Perhaps surprisingly, the optimal Bayesian probabilities of error are found to exhibit different asymptotic behaviors. 

\begin{IEEEkeywords}
Heavy-Tailed, alpha-stable, Cauchy, Hypothesis Testing, Correlated.
\end{IEEEkeywords}

\end{abstract}

%%%%%%%%%%%%%%%%%%%%%%%%%%%%%%%%%%%%%%%%%%%%%%%%%%%%%%%%%%%%%%%%%%%%%%%%%%%%%%%%
\section{Introduction}
\IEEEPARstart{D}{etecting} the underlying sampling distribution of an empirical set of $n$-observations is a challenging and fundamental question in decision theory. It falls under the family of hypothesis testing problems whenever deciding on a possibility among a given number of choices. In the binary scenario ($p$ vs. $q$), it is well known that  Likelihood Ratio Tests (LRT) detectors are optimal for multiple setups. Under a Neyman-Pearson criterion or under a Bayesian risk formulation, the  performance of such detectors is expected to be exponentially decaying with $n$, with the exponent being a Kullback-Leibler (KL) distance between pertinent laws. This is the case in a discrete scenario~\cite[Th. 11.8.3 \& Th. 11.9.1]{cover} where a well-known proof is based on the method of types. Extensions to continuous setups do exist under certain conditions~\cite{tusnady1977asymptotically,csiszar1998}. 

One interesting quest is to analyze this decision problem while testing for 
\begin{enumerate}   
\item the Cauchy distribution: representative of the family of ``heavy-tailed" $\alpha$-stable distributions versus 
\item the Gaussian distribution: a central model among finite-variance probability laws. 
\end{enumerate}

Note that both models are limiting distributions of Independent and Identically Distributed (IID) sums of Random Variables (RV)s by the results of the generalized Central-Limit Theorem~\cite{kolmo} and therefore supported by strong theoretical evidence. Additionally, isotropic\footnote{We will use ``circular'' interchangeably with ``isotropic'' in this document.} $\alpha$-stable distributions are found to be accurate models for empirical data in a multitude of scenarios in which they are shown to provide a better fit to the data under consideration~\cite{nbrm2001,kuzer2004,nolancirc2013}. 

In this work, we tackle the problem of testing the $\alpha$-stable hypothesis by considering the $n$-dimensional circularly-symmetric Cauchy distribution. More specifically, given a sample of $n$ observations sampled from a rotationally invariant density, we want to characterize the error limits of the LRT devices detecting whether the observations are circular Gaussian --with Probability Density Function (PDF) denoted $p_G(\cdot)$, or circular Cauchy --with PDF $p_C(\cdot)$. We note that the proposed setup is substantially different from the standard one regarding two main aspects:
\begin{enumerate}
\item First, circular Cauchy vectors are correlated and not independent as is the case for a Gaussian vector. 
\item Second, the relative entropy $D(p_C\|p_G)$, 
%a key quantity capturing the exponential error in the standard case 
is infinite as shown in~Appendix~\ref{app:KL}. 
\end{enumerate}
Correspondingly, the results are found to deviate from the norm regarding two aspects:
\begin{itemize}
    \item[$\bullet$] First, we show that the probability of error is not exponentially small with $n$. Namely we prove that, under a Neyman-Pearson formulation and in the first order of the exponent, the leading term is linear for the error of the first kind ($\PCG$) but only logarithmic for that of the second kind ($\PGC$). A similar logarithmic behavior of the exponent of the leading term is shown for the Bayesian error. Whenever possible, we determine the multiplicative constants of the provided asymptotic expressions. 
    \item[$\bullet$] Second, we observe that the two types of Bayesian errors do not possess the same asymptotic rate which is at odds with the general assumption of a similar asymptotic behavior under IID statistics (see~\cite[Th. 11.9.1]{cover} for the discrete case). We postulate that this is due to the fact that the observations are correlated and we support our observation by providing in Appendix~\ref{app:corr} illustrative examples in which one of the two hypotheses is correlated.
\end{itemize}

The rest of this paper is organized as follows: we present the problem in Section~\ref{sec:PF} and the main results in the form of two theorems in
Section~\ref{sec:main} along with appropriate numerical evaluations. Derivation of the results use
technical lemmas and are detailed in Section~\ref{sec:pre}. Finally,
%are presented in section~\ref{sec:Num}
 Section~\ref{sec:con} concludes the paper. 

\paragraph*{Notations}
We adopt the following conventions. For two asymptotically positive functions, $f(n) = o \left(g(n)\right)$ if for every $c > 0$, $f(n) \leq c g(n)$ for $n$ large enough. We write $f(n) = \Theta\left(g(n)\right)$ if there exist $c_1 > 0$ and $c_2 > 0$ such that $f(n) \leq c_1 g(n)$ and $f(n) \geq c_2 g(n)$ for $n$ large enough. 
%We also write $f(n) = \Omega\left(g(n)\right)$ if there exist $c > 0$ such that $f(n) \geq c \, g(n)$ for $n$ large enough. 
Functions
$f(n)$ and $g(n)$ are said to be ``asymptotically equivalent'' if $ \lim_{n \to \infty} \frac{f(n)}{g(n)} = 1$; we write $f(n) \sim g(n)$. 

%%%%%%%%%%%%%%%%%%%%%%%%%%%%%%%%%%%%%%%%%%%%%%%%%%%%%%%%%%%%%%%%%%%%%%%%%%%%%%%%
\section{Problem Formulation}
\label{sec:PF}
 We consider the binary hypothesis testing problem where, given a (large) number $n > 0$ of observations, we need to decide between the two hypotheses:
\begin{eqnarray*}
    && \left\{ \begin{array}{ll}
        \displaystyle  H_0 \equiv \HG: &{\bf Y}^n \,\, \text{circular} \,\,  \mathcal{N}(0,\sigma^2), \\ 
        %\displaystyle - \frac{\Gamma(2-\alpha)}{(\alpha-1)} (c_1+c_2)\cos\left(\frac{\pi}{2}\alpha\right)  &  1<\alpha<2\\
        \displaystyle  H_1 \equiv \HC: &{\bf Y}^n \,\, \text{circular} \,\,  \mathcal{C}(0,\gamma).
      \end{array} \right.
  \end{eqnarray*}

Whether under a Bayesian risk formulation, or a  Neyman-Pearson criterion, the optimal device is known to be a LRT: 
\begin{equation}
\frac{p_{{\bf Y}|H} \left( {\bf y}|\HG \right)}{p_{{\bf Y}|H}\left({\bf y}|\HC \right)} = \frac{p_G({\bf y})}{p_C({\bf y})} \HypGC \eta,
\label{NPdet}
\end{equation}
for some appropriately chosen $\eta >0$. Note that we identify the ``Gaussian hypothesis" $\HG$ with the null hypothesis $H_0$ whenever we use the "false alarm", "detection" and "error types" terminology in this work. As such, the probability of false alarm for example $P_F$ is equal to $P_F = \PCG$ and as $\eta$ increases, $P_F$ and $P_D$ increase.

Our main objective is twofold. First we study the optimal Bayesian device and characterize its overall probability of error at large values of $n$ via computing the asymptotic behavior of the probabilities of detection $P_D$ and false alarm $P_F$. More specifically, if $\pi_G$ and $\pi_C = 1 - \pi_G$ denote the prior probabilities of $H_0$ and $H_1$ respectively, Theorem~\ref{th:cherinfo} shows that as $n \to \infty$, the probability of error of the optimal device is
\begin{equation*}
P_{\text{e}} = \pi_G P_F + \pi_C (1 - P_D) = \Theta\left(\sqrt{\frac{\ln n}{n}}\right).
\end{equation*}

Second, we adopt the Neyman-Pearson criterion and answer the following question: let $P_F = \PCG = \eps$, for some $\eps \in (0, 1)$. What is the asymptotic behavior of $1 - P_D = \PGC$ of the optimal device as $n \rightarrow \infty$? We show in Theorem~\ref{th:error} that
\begin{equation*}
1 - P_D = \PGC = \Theta\left(\frac{1}{\sqrt{n}}\right).
\end{equation*}

We also answer the same question for the inverted problem: if $1 - P_D = \eps$, we show in Theorem~\ref{th:error} that 
%there exist positive constants $a$ and $b$ such that for $n$ large enough,
\begin{equation*}
 %P_F = (\PCG) \leq a \, \frac{e^{-\Theta(n)}}{\sqrt{n}} \qquad P_F = (\PCG) \geq b \, \frac{e^{-\Theta(n)}}{\sqrt{n}}.
 P_F = \PCG = \frac{e^{-\Theta(n)}}{\sqrt{n}}.
\end{equation*}

%Moreover, we answer the same question for the inverted problem : 
%\begin{eqnarray*}
%    && \left\{ \begin{array}{ll}
%        \displaystyle  \text{H}_0: &{\bf Y}^n \,\, \text{Circular} \,\, \mathcal{C}(0,\gamma)\\ 
        %\displaystyle - \frac{\Gamma(2-\alpha)}{(\alpha-1)} (c_1+c_2)\cos\left(\frac{\pi}{2}\alpha\right)  &  1<\alpha<2\\
%        \displaystyle  \text{H}_1:  &{\bf Y}^n \,\, \text{i.i.d.} \sim{N}(0,\sigma^2)
%      \end{array} \right.
%  \end{eqnarray*}

%We note that the answer to the inverted problem question follows from the original one by simply interchanging the roles of $P_F$ and $1- P_D$ and for some $\eta^{'} = \frac{1}{\eta} > 0$. Specifically, in the original problem, we fix $1 - P_D = \eps$ and study the behavior of $P_F$ as a function of $n$.

For mathematical convenience and without loss of generality, we scale our observation vector ${\bf y}^n$ by $\frac{1}{\gamma}$ and we denote {\em the constant\/} $\sqrt{2} \frac{\sigma}{\gamma}$ by $\xi = \sqrt{2} \frac{\sigma}{\gamma}$. The resulting equivalent problem is: 
 \begin{eqnarray*}
    && \left\{ \begin{array}{ll}
        \displaystyle  H_0 \equiv \HG: &{\bf Y}^n \,\, \text{circular} \,\, \mathcal{N} \left(0, \frac{\xi^2}{2} \right)\\ 
        %\displaystyle - \frac{\Gamma(2-\alpha)}{(\alpha-1)} (c_1+c_2)\cos\left(\frac{\pi}{2}\alpha\right)  &  1<\alpha<2\\
        \displaystyle  H_1 \equiv \HC: &{\bf Y}^n \,\, \text{circular} \,\, \mathcal{C}(0,1).
      \end{array} \right.
  \end{eqnarray*}

The PDFs $p_{{\bf Y}|H}({\bf y}|H_0)$ and $p_{{\bf Y}|H}({\bf y}|H_1)$ are given by
\begin{align}
p_{{\bf Y}|H}({\bf y}|H_0) &= p_{{\bf Y}|H}({\bf y}|\HG) = p_G({\bf y}) = \frac{1}{\pi^{\frac{n}{2}} \xi^n} e^{-\frac{r^2}{\xi^2}}, \nonumber\\[0.2em]
p_{{\bf Y}|H}({\bf y}|H_1) &= p_{{\bf Y}|H}({\bf y}|\HC) \! = \! p_C({\bf y})  \! = \! \frac{\Gamma\!\left(\frac{n + 1}{2}\right)}{\pi^{\frac{n + 1}{2}} } \frac{1}{\left(1 + r^2\right)^{\frac{n + 1}{2}}} \label{eq:cauchypdfn}
\end{align}
where $r^2 = \|{\bf y}\|^2$. 

%%%%%%%%%%%%%%%%%%%%%%%%%%%%%%%%%%%%%%%%%%%%%%%%%%%%%%%%%%%%%%%%%%%%%%%%%%%%%%%%
\section{Main Results}
\label{sec:main}

We state our main results in the form of Theorems~\ref{th:cherinfo} and~\ref{th:error} and we validate them numerically in Section~\ref{sec:Num}. 

In a Bayesian setting, the probably of error is determined in the following theorem:
\begin{restatable}[]{theorem}{Bayesian}
\label{th:cherinfo}
Consider a sequence of $n$ observations and a binary decision problem with two hypotheses for their distribution: either circular (IID) Gaussian $\mathcal{N}(0,\sigma^2)$ with prior probability $0 < \pi_G < 1$, or circular (correlated) Cauchy $\mathcal{C}(0,\gamma)$ with prior probability $ \pi_C = 1 - \pi_G$. 

Let $\xi \eqdef \sqrt{2}\frac{\sigma}{\gamma}$, $\tilde{\eta} \eqdef \frac{\pi_C}{\pi_G}$ and $C \eqdef \sqrt{\frac{1}{\tilde{\eta}} \, \frac{\xi}{2}} \, e^{\frac{1}{2 \xi^2}}$. For the Maximum A Posteriori (MAP) detector --optimal under a Bayesian formulation, 
\begin{align}
1- P_D & \sim \sqrt{\frac{2}{\pi}} \, \frac{1}{\tilde{\eta}}\sqrt{\frac{\ln \big( C n \big)}{ \big( C n \big)}} \nonumber \\
P_F & \sim \sqrt{\frac{2}{\pi}} \, \frac{1}{\sqrt{ \big( C n \big) \ln \big( C n \big)}} \nonumber \\
P_e = \pi_G P_F + \pi_C (1 - P_D) & \sim \sqrt{\frac{2}{\pi}} \, \frac{\pi_C}{\tilde{\eta}}\sqrt{\frac{\ln \big( C n \big)}{ \big( C n \big)}}. \label{eq:cherinfo}
\end{align}
\end{restatable}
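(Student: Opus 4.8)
The plan is to use rotational symmetry to reduce the problem to a one-dimensional test in $u = r^2 = \|{\bf y}\|^2$. Since both densities in~(\ref{eq:cauchypdfn}) depend on ${\bf y}$ only through $r^2$, taking logarithms in~(\ref{NPdet}) shows that the log-likelihood ratio equals a constant (in ${\bf y}$) plus $g(u)$, where $g(u) = -u/\xi^2 + \frac{n+1}{2}\ln(1+u)$. The function $g$ is strictly concave on $[0,\infty)$ with a unique maximizer $u^{\star} = \frac{(n+1)\xi^2}{2} - 1$, so the MAP rule ``decide $\HG$ iff $p_G/p_C \ge \tilde{\eta}$'' becomes $g(u) \ge T$ with $T = \ln\tilde{\eta} - \tfrac12\ln\pi + n\ln\xi + \ln\Gamma(\tfrac{n+1}{2})$, and by concavity the acceptance set for $\HG$ is a single interval $u \in (u_1,u_2)$ (an annulus in ${\bf y}$). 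Consequently $1 - P_D = \int_{u_1}^{u_2} f_C(u)\,du$ and $P_F = \int_0^{u_1} f_G(u)\,du + \int_{u_2}^{\infty} f_G(u)\,du$, where under $\HG$ the law of $u$ is $\mathrm{Gamma}(n/2,\xi^2)$ (mean $n\xi^2/2$) and under $\HC$ the radial law of $u$ follows from~(\ref{eq:cauchypdfn}); I would record both radial densities as a preliminary lemma in Section~\ref{sec:pre}.

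The heart of the argument is locating $u_1,u_2$. The key structural fact is that $u^{\star}$ coincides to leading order with the mean $n\xi^2/2$ of $u$ under $\HG$, so the threshold sits essentially at the peak of the concave profile. A Stirling expansion of $\ln\Gamma(\tfrac{n+1}{2})$ shows that the $\Theta(n\ln n)$ and $\Theta(n)$ contributions to $g(u^{\star}) - T$ cancel, leaving $g(u^{\star}) - T = \tfrac12\ln n + O(1)$; carrying the expansion to constant order is what produces the constant $C$. Expanding $g$ to second order about $u^{\star}$, using $g''(u^{\star}) = -2/((n+1)\xi^4)$ and checking that the cubic remainder is negligible (the half-width is $o(u^{\star})$), gives $u_{1,2} = u^{\star} \mp \Delta$ with $\Delta = \xi^2\sqrt{(n+1)\,(g(u^{\star}) - T)} = \Theta(\xi^2\sqrt{n\ln n})$; equivalently, measured in standard deviations of $u$ under $\HG$, the endpoints lie at distance $\sim \sqrt{\ln(Cn)}$ from the mean.

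For $1 - P_D$ I would argue that the Cauchy radial density is essentially flat across the window $(u_1,u_2)$: its relative width $\Delta/u^{\star} = \Theta(\sqrt{(\ln n)/n})$ tends to $0$, and both the log-derivative times $\Delta$ and the log-curvature times $\Delta^2$ of $f_C$ vanish at $u^{\star}$, so that $\int_{u_1}^{u_2} f_C(u)\,du \sim f_C(u^{\star})\,(u_2 - u_1)$. Evaluating $f_C(u^{\star})$ with $\Gamma(\tfrac{n+1}{2})/\Gamma(\tfrac{n}{2}) \sim \sqrt{n/2}$ and $(u^{\star}/(1+u^{\star}))^{n/2} \sim e^{-1/\xi^2}$, and multiplying by the width $2\Delta$, collects the constants into the claimed $1 - P_D \sim \sqrt{2/\pi}\,\tilde{\eta}^{-1}\sqrt{\ln(Cn)/(Cn)}$.

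For $P_F$, under $\HG$ the variable $u$ is a sum of $n$ i.i.d. terms and the thresholds lie at $\sim\sqrt{\ln(Cn)}$ standard deviations from the mean, which stays inside the moderate-deviation regime since $(\ln n)^{3/2}/\sqrt n \to 0$; hence each Gamma tail is asymptotic to $\phi(t)/t$ at $t = \sqrt{\ln(Cn)}$, and the two tails sum to $\sqrt{2/\pi}\,/\sqrt{(Cn)\ln(Cn)}$. Finally, since $(1 - P_D)/P_F \sim \tilde{\eta}^{-1}\ln(Cn) \to \infty$, the Bayesian error is dominated by the missed-detection term, so $P_e = \pi_G P_F + \pi_C(1 - P_D) \sim \pi_C(1 - P_D)$, which is the last line of~(\ref{eq:cherinfo}). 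The main obstacle is precisely the near-degeneracy exposed in the second step: because the detection threshold falls asymptotically at the maximum of the concave log-likelihood profile, every leading term cancels and the answer is dictated entirely by $O(1)$ corrections. Nailing the constant $C$ therefore requires Stirling expansions kept to constant order, together with a \emph{uniform} justification --- valid to the required precision rather than merely to leading order --- that the slowly-varying approximation for $f_C$ and the Gaussian-tail approximation for the $\mathrm{Gamma}(n/2,\xi^2)$ law both hold on windows whose half-width grows like $\sqrt{\ln n}$ standard deviations.
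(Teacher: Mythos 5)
Your proposal is correct and follows essentially the same route as the paper: the acceptance region for $\HG$ is the annulus $r\in(r_1,r_2)$, the endpoints are located by a local (quadratic/Laplace) expansion of the log-likelihood ratio about its peak $r_{\max}^2=\xi^2\frac{n+1}{2}-1$ at distance $\Theta(\sqrt{n\ln n})$, the Cauchy mass of the annulus is evaluated as a slowly-varying density times the width, and the Gaussian mass of the complement via Gaussian-tail ($Q$-function) asymptotics of the Gamma law. The paper packages the same steps slightly differently (Lambert~$\mathcal{W}$ asymptotics for $r_1^2,r_2^2$ in Regime (R-i), and a Taylor expansion of the upper incomplete Gamma function for $1-P_D$), but the underlying analysis and the resulting constants are the same.
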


Note that the optimal error probabilities $P_F$ and $(1 - P_D)$ --the ones that minimize the Bayesian probability of error $P_{\text{e}}$, exhibit an asymptotic behavior that is {\em not symmetric\/}. That symmetry is a known property of the optimal Bayesian detector under IID observations. This is not true in general under correlated settings as in Theorem~\ref{th:cherinfo}. We provide additional examples in Appendix~\ref{app:corr} which also show that there are no guarantees under the correlated setup for $P_{\text{e}}$ to become infinitesimally small as the number of observations $n$ increases. 

\begin{restatable}[]{theorem}{NP}
\label{th:error}
Consider a sequence of $n$ observations and a binary decision problem with two hypotheses for their distribution: either circular (IID) Gaussian $\mathcal{N}(0,\sigma^2)$ or circular (correlated) Cauchy $\mathcal{C}(0,\gamma)$. Let $\xi \eqdef \sqrt{2}\frac{\sigma}{\gamma}$ and consider an $0 < \eps < 1$. 

Under a Neyman-Pearson criterion, the optimal LRT detector is such that
\begin{itemize}
\item Case 1: whenever $P_F = \eps$, $(1 - P_D) \sim \kappa_0(\epsilon) \frac{1}{\sqrt{n}}$, where $\kappa_0(\epsilon) \eqdef \sqrt{\frac{2}{\pi}} \frac{2}{\xi}e^{-\frac{1}{\xi^2}}Q^{-1}\left(\frac{\epsilon}{2}\right)$.
\item Case 2: whenever $(1- P_D) = \eps$, 
\begin{equation*}
\frac{\kappa_1(\epsilon)e^{- \kappa_2(\epsilon) n }}{\sqrt{n}} \leq P_F \leq \frac{\kappa_3(\epsilon)e^{- \kappa_4(\epsilon) n }}{\sqrt{n}},
\end{equation*}
for some $\kappa_1(\epsilon), \kappa_2(\epsilon), \kappa_3(\epsilon), \kappa_4(\epsilon) > 0$.
\end{itemize}
\end{restatable}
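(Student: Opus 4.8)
The plan is to reduce the entire problem to the scalar sufficient statistic $T = r^2 = \|{\bf y}\|^2$. Forming the log-likelihood ratio from~\eqref{eq:cauchypdfn} gives $\ln\frac{p_G}{p_C} = c(n) + g(T)$ with $g(T) = -\frac{T}{\xi^2} + \frac{n+1}{2}\ln(1+T)$, a \emph{strictly concave} function of $T$ whose maximizer sits at $T^\star = \frac{(n+1)\xi^2}{2}-1$. Hence every LRT region $\{\,p_G/p_C \ge \eta\,\}$ (the ``decide Gaussian'' region) is the super-level set of a concave function, i.e.\ an interval $[T_1,T_2]$ with $T_1 < T^\star < T_2$ (possibly $T_1=0$). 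Consequently $P_F = \PCG = \Pr(T\notin[T_1,T_2]\mid \HG)$ and $1-P_D = \PGC = \Pr(T\in[T_1,T_2]\mid \HC)$, and the two cases differ only in which of these constraints is frozen at $\eps$.

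For Case~1 I would work in the central-limit window. Under $\HG$, $T=\sum_{i=1}^n Y_i^2$ is a sum of i.i.d.\ terms with mean $\frac{\xi^2}{2}$ and variance $\frac{\xi^4}{2}$, so $T$ is asymptotically $\mathcal{N}\!\big(\tfrac{n\xi^2}{2},\tfrac{n\xi^4}{2}\big)$, and the maximizer $T^\star$ coincides with this mean up to an $O(1)$ shift. A quadratic (Laplace) expansion of $g$ around $T^\star$ shows the interval is $[T^\star-\Delta,T^\star+\Delta]$ to leading order, the cubic remainder being $o(1)$ over the relevant scale $\Delta = O(\sqrt n)$. Imposing $P_F=\eps$ then forces each Gaussian tail to equal $\eps/2$, whence $\Delta \sim \xi^2\sqrt{n/2}\,Q^{-1}(\eps/2)$. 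Finally $1-P_D \sim 2\Delta\, f^C_T(T^\star)$, where $f^C_T$ is the density of $T$ under $\HC$ (a Beta-prime$(n/2,1/2)$ law); evaluating $f^C_T(T^\star)$ with the Stirling ratio $\Gamma(\tfrac{n+1}{2})/\Gamma(\tfrac n2)\sim\sqrt{n/2}$ and the limit $(1+1/T^\star)^{-(n+1)/2}\to e^{-1/\xi^2}$ gives $f^C_T(T^\star)\sim \tfrac{2}{\sqrt\pi}\,\tfrac{1}{n\xi^3}e^{-1/\xi^2}$, and substituting reproduces exactly $\kappa_0(\eps)=\sqrt{2/\pi}\,\tfrac 2\xi e^{-1/\xi^2}Q^{-1}(\eps/2)$.

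For Case~2 the geometry is completely different, and I would exploit the Gaussian scale-mixture representation of the isotropic Cauchy: ${\bf Y}\stackrel{d}{=}\sqrt{A}\,{\bf G}$ with ${\bf G}\sim\mathcal{N}(0,I_n)$ and $A$ a Lévy (positive $\tfrac12$-stable) variable independent of ${\bf G}$, so that $T = A\,\|{\bf G}\|^2 \approx nA$ for large $n$. Freezing $1-P_D=\Pr(T\in[T_1,T_2]\mid\HC)=\eps$ then amounts to $\Pr(A\in[T_1/n,T_2/n])=\eps$, which can hold with a \emph{constant} $\eps$ only if the endpoints satisfy $T_1,T_2=\Theta(n)$, i.e.\ $a_i := T_i/n = \Theta(1)$ with $\xi^2/2$ strictly interior to $(a_1,a_2)$. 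But under $\HG$ the same ratio $T/n=\frac1n\sum Y_i^2$ concentrates at $\xi^2/2$ with $O(1/\sqrt n)$ fluctuations, so $P_F=\Pr(T/n\notin[a_1,a_2]\mid\HG)$ is a \emph{large-deviation} event, decaying like $e^{-\Theta(n)}$. Invoking the Cramér rate function $I(a)=\tfrac{a}{\xi^2}-\tfrac12-\tfrac12\ln\tfrac{2a}{\xi^2}$ (which vanishes at $a=\xi^2/2$) together with the Bahadur--Rao $1/\sqrt n$ prefactor yields the claimed form $P_F = e^{-\Theta(n)}/\sqrt n$.

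The main obstacle, and the reason Case~2 yields only a two-sided bound rather than an exact asymptotic, is the precise location of the endpoints $T_1,T_2$. In Case~1 they live in the CLT window around $T^\star$, where a clean Laplace expansion pins down the constant. In Case~2 they live deep in the large-deviation regime and are jointly constrained by the transcendental LRT relation $g(T_1)=g(T_2)$ and by the Cauchy-mass condition $\Pr(A\in[a_1,a_2])=\eps$; solving this system exactly is sensitive to the sub-leading corrections of both the $\chi^2_n$ fluctuation (the replacement $T\approx nA$) and the Cauchy law of $T$. The workable route is therefore to \emph{sandwich} each endpoint between explicit values $a_i^-\le a_i\le a_i^+$ using monotone bounds on the c.d.f.\ of $T$ under $\HC$, and to bound $P_F$ above and below by the corresponding one- and two-sided Gaussian large-deviation tails; since $I(\cdot)$ is strictly monotone away from $\xi^2/2$, the endpoint brackets translate into the (generally unequal) exponents $\kappa_4(\eps)\le\kappa_2(\eps)$ and the prefactors $\kappa_3,\kappa_1$. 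Throughout, I would rely on technical lemmas supplying (i) Stirling/Gamma-ratio estimates, (ii) a local CLT / Berry--Esseen bound for $T$ under $\HG$, (iii) sharp tail and density estimates for $T$ under $\HC$, and (iv) the Bahadur--Rao exact large-deviation asymptotics for $\frac1n\sum Y_i^2$.
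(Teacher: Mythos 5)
Your proposal is correct in its essentials and reaches the stated constants, but it travels a genuinely different road from the paper's, especially in Case~2. The paper parametrizes everything by the threshold $\eta$, solves $\ell(r)=\eta$ explicitly via the Lambert~W branches $\mathcal{W}_0,\mathcal{W}_{-1}$, classifies the asymptotics of $\eta^{2/(n+1)}$ into three regimes, and proves a separate lemma (Lemma~\ref{lem:regime}) showing that $P_F=\eps$ forces Regime (R-i) while $1-P_D=\eps$ forces Regime (R-ii); Case~1 then follows from the incomplete-Gamma/Laplace asymptotics of Lemma~\ref{lem:asymptotics}, and Case~2 from hand-crafted bounds $\ln x - x \le -1-\tfrac{(x-1)^2}{2}$ (for $x\le 1$), $\ln x - x\le -1-K(x-1)$ (for $x\ge c>1$), and $\ln x - x \ge -1-\tfrac{(x-1)^2}{2}$ (for $x\ge 1$) applied to the radial integral~\eqref{eq:changevarLap}. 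You instead parametrize by the acceptance interval $[T_1,T_2]$ for $T=r^2$, which eliminates Lambert~W entirely: your Case~1 (CLT for $\sum Y_i^2$ plus the observation that the Cauchy density of $T$ is asymptotically constant over the $O(\sqrt n)$ window around $T^\star$, so $1-P_D\sim 2\Delta f_T^C(T^\star)$) is the probabilistic repackaging of the paper's Laplace-method computation and gives the same $\kappa_0(\eps)$; your Case~2 replaces the regime classification by the scale-mixture identity $T/n\Rightarrow A$ (Lévy), which is a cleaner conceptual explanation of why fixing $1-P_D=\eps$ forces $T_1,T_2=\Theta(n)$ with limits pinned by the concave limit profile $-a/\xi^2+\tfrac12\ln a$, and replaces the elementary integral bounds by Cram\'er/Bahadur--Rao with the correct rate function $I(a)=\tfrac{a}{\xi^2}-\tfrac12-\tfrac12\ln\tfrac{2a}{\xi^2}$. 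What each buys: the paper's route is fully self-contained and yields explicit constants in terms of $\mathcal{W}_0(-e^{-1}\kappa)$ and $\mathcal{W}_{-1}(-e^{-1}\kappa)$; yours imports Berry--Esseen and Bahadur--Rao but is more transparent and would in principle give matching exponents $\kappa_2=\kappa_4$ if the endpoints $a_1,a_2$ could be solved for. Two points you should make explicit when writing this up: (i) in Case~1, the fact that $P_F=\eps$ places \emph{both} endpoints in the CLT window (and symmetrically, so that each tail carries $\eps/2$) needs the short monotonicity argument ruling out $T_1=0$ and asymmetric intervals --- your quadratic expansion of $g$ supplies it, but it is the analogue of the paper's Lemma~\ref{lem:regime} and should not be left implicit; (ii) in Case~2, you need both $a_1,a_2$ bounded away from $\xi^2/2$ (not merely $\xi^2/2$ interior), which follows from strict concavity of the limit profile together with $\Pr(A\in[a_1,a_2])=\eps>0$, and $a_1$ bounded away from $0$ (else $a_2\to\infty$ and the probability tends to $1$); this is needed for the lower bound on $P_F$ as well as the upper.
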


The derivations of the theorems are presented in Section~\ref{sec:pre}.

%%%%%%%%%%%%%%%%%%%%%%%%%%%%%%%%%%%%%%%%%%%%%%%%%%%%%%%%%%%%%%%%%%%%%%%%%%%%%%%%
\subsection{Numerical Simulations}
\label{sec:Num}

We compute the performance of the LRT defined in~(\ref{NPdet}) by evaluating the different quantities investigated in Theorems~\ref{th:cherinfo} and~\ref{th:error}. For the results presented hereafter we use  $\xi = \sqrt{2}$ and dimension $n \in [11,340]$. Fig.~\ref{figure1} depicts the behavior of the Bayesian probability of error $P_e$ function of $n$ for different values of the aprioris. Fig.~\ref{figure1} shows a rather fast convergence of the approximate expression presented in Theorem~\ref{th:cherinfo} to the numerically evaluated $P_e$~\eqref{eq:cherinfo}. In Fig.~\ref{figure2}, we consider Case 1 in Theorem~\ref{th:error} and plot the numerically evaluated  error of the second kind $(1 - P_D)$ versus $n$ in addition to the approximate expression for different values of $\eps$. Again, the fast convergence of the approximation is observed. Finally, we depict in Fig.~\ref{figure3} $-\ln P_F$ as a function of $n$. A linear behavior is noticed as predicted by the result of Case 2 in Theorem~\ref{th:error}. 

\begin{figure}[htp]
  \begin{center}
    \includegraphics[width=\linewidth]{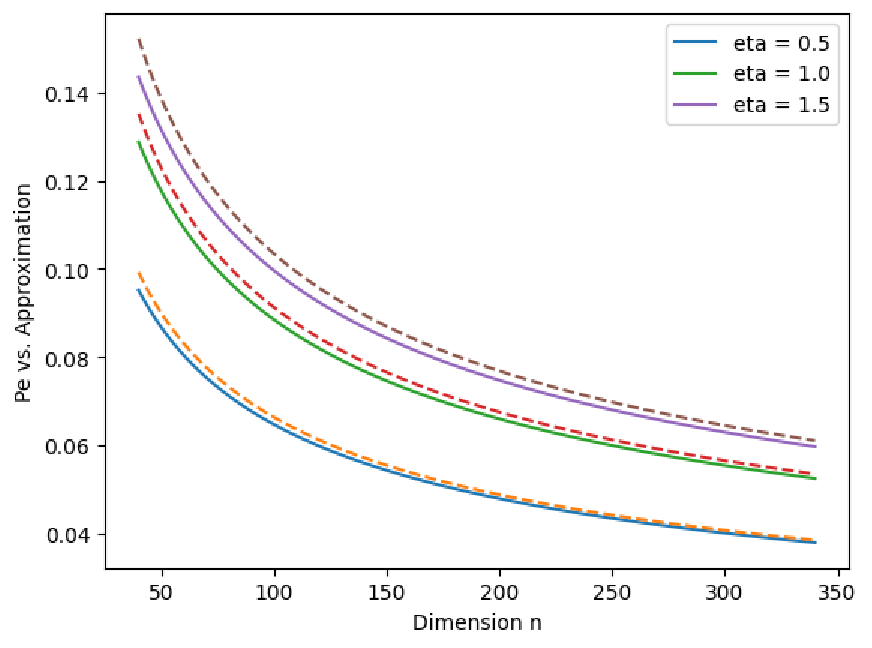}
    \caption{\small $P_e$ vs. $n$ for $\eta = 0.5, 1, 1.5$. Solid line: numerical computations. Dashed line: approximate expression.
      \label{figure1}}
  \end{center}
\end{figure} 

\begin{figure}[htp]
  \begin{center}
    \includegraphics[width=\linewidth]{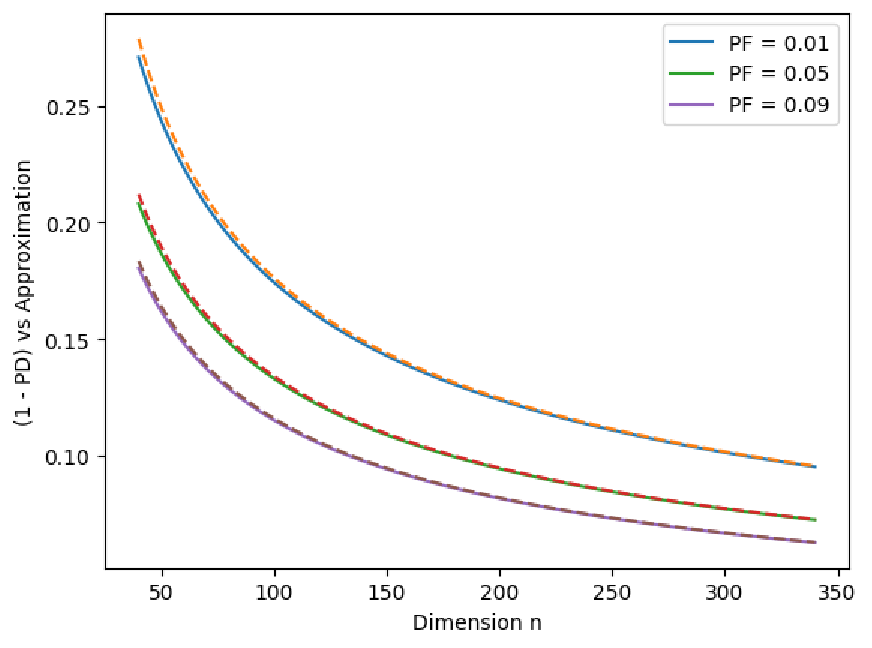}
    \caption{\small $(\PGC)$ vs. n for $P_F = 0.01, 0.05, 0.09$. Solid line: numerical computations. Dashed line: approximate expression.
      \label{figure2}}
  \end{center}
\end{figure} 

\begin{figure}[htp]
  \begin{center}
    \includegraphics[width=\linewidth]{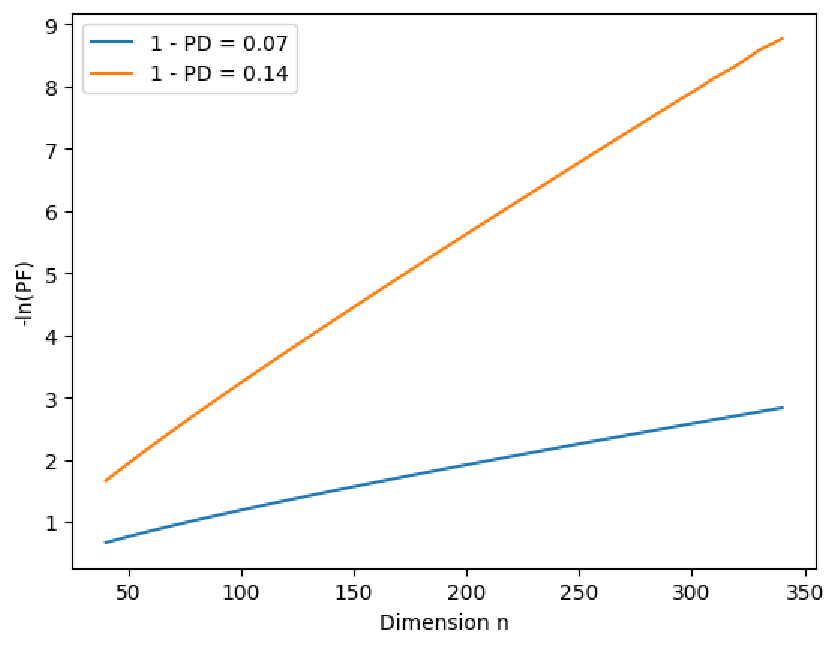}
    \caption{\small $- \ln (\PCG)$ vs. n for $1- P_D = 0.07, 0.14$.
      \label{figure3}}
  \end{center}
\end{figure} 

The remainder of this document is dedicated to deriving the main results stated in Theorems~\ref{th:cherinfo} and~\ref{th:error}.

%%%%%%%%%%%%%%%%%%%%%%%%%%%%%%%%%%%%%%%%%%%%%%%%%%%%%%%%%%%%%%%%%%%%%%%%%%%%%%%%
\section{Technical Proofs}
\label{sec:pre}

\subsection{Preliminaries}

Define the left-hand side of~\eqref{NPdet} as $\ell(r)$:
\begin{equation*}
% \label{eq:LHSNP}
\ell(r) = \frac{p_{{\bf Y}|H} \left( {\bf y}|\HG \right)}{p_{{\bf Y}|H}\left({\bf y}|\HC \right)} = \frac{\sqrt{\pi}}{\Gamma\!\left(\frac{n + 1}{2}\right) \xi^{n}} e^{-\frac{r^2}{\xi^2}} \left(1 + r^2\right)^{\frac{n + 1}{2}}.
\end{equation*}

Whenever $\ell(r)$ is greater or equal to $\eta$, the LRT~\eqref{NPdet} decides on the Gaussian hypothesis. As such, when computing $P_F$ and $P_D$, the larger the value of $\eta$ is, the larger these probabilities.

%We note that when $n$ is large enough ($n > 2/\xi^2 - 1$), we have $\xi^2 > \frac{2}{n + 1}$ and $\ell(r)$ satisfies the following properties: 
We note that when $n > (2/\xi^2-1)$,  $\ell(r)$ satisfies the following properties illustrated in Fig.~\ref{fig:ratio-plot}:
\begin{enumerate}
\item Asymptotically, $\ell(r) \rightarrow 0^{+}$ as $r \rightarrow \infty$.
\item Its $y$-intercept, which we denote by $\eta_0$, is equal to 
\begin{equation}
\label{eq:eta0}
\eta_{0} = \frac{\sqrt{\pi}}{\Gamma\!\left(\frac{n + 1}{2}\right) \, \xi^n}.
\end{equation} 
\item It peaks at 
\begin{equation*}
r_{\max} = \sqrt{\xi^2 \frac{(n+1)}{2} -1},
\end{equation*} 
reaching a value of 
\begin{equation*}
% \label{eq:fmax}
\eta_{\max} = e^{\frac{1}{\xi^2}} \left(\frac{n+1}{2} \, \xi^2 e^{-1} \right)^\frac{n+1}{2} \eta_0.
\end{equation*}
\end{enumerate}

\begin{figure}[htb]
    \centering
    \includegraphics[width=\linewidth]{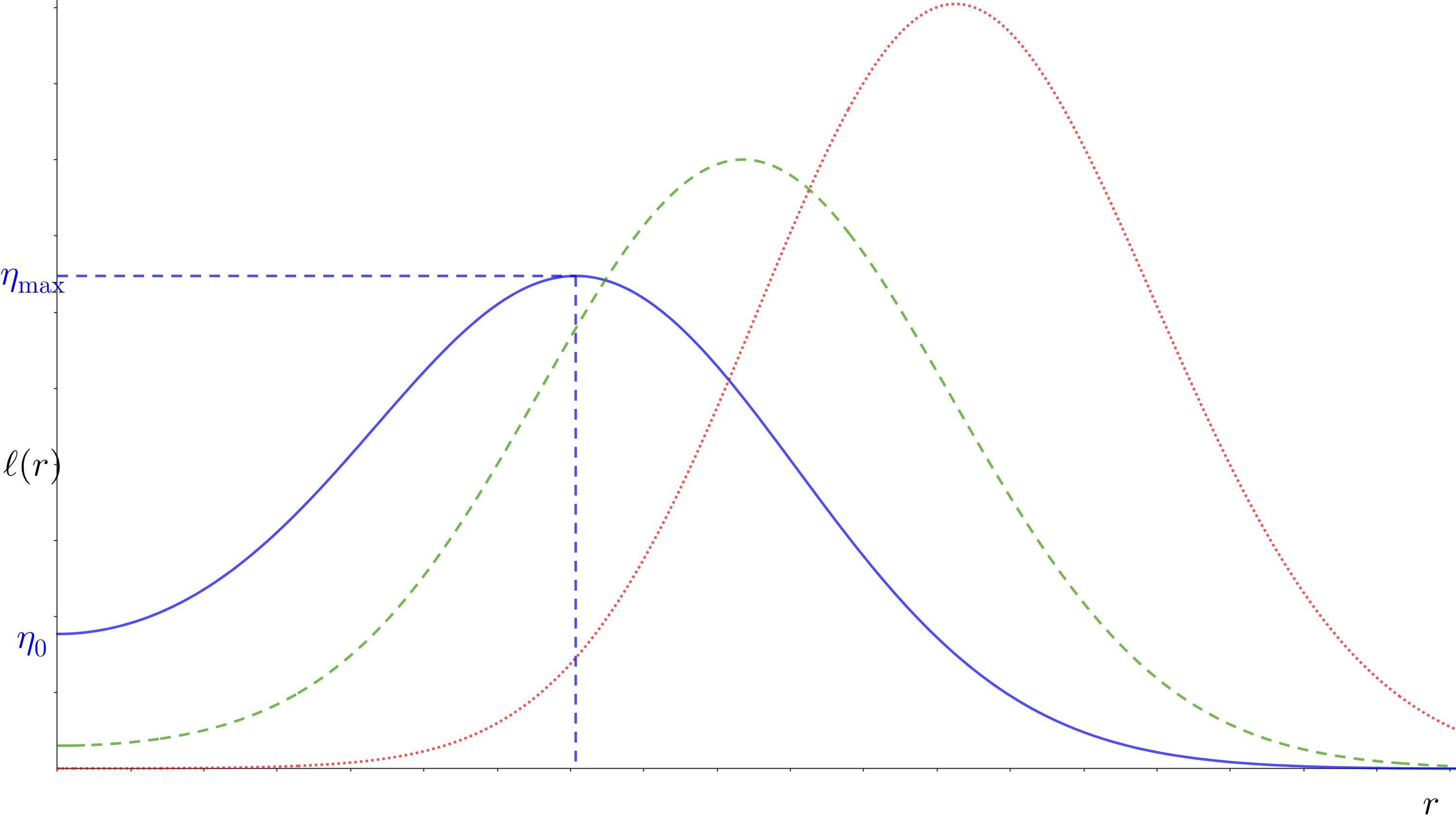}
    \caption{$\ell(r)$ for $\xi^2=1$, with $n=5$ (solid blue), $n=8$ (dashed green), and $n=13$ (pointed red).}
    \label{fig:ratio-plot}
\end{figure}

We assume that this condition $n > (2/\xi^2-1)$ is always satisfied since we are interested in the behavior of detectors under a large number $n$ of observations. 

Using Stirling's approximation, one can show:%\footnote{Two functions $f(x)$ and $g(x)$ are said to be ``asymptotically equivalent''; we write $f(x) \sim g(x)$ if $ \lim_{x \to \infty} \frac{f(x)}{g(x)} = 1.$}: 
\begin{equation}
\eta_0 \sim \frac{(2 e)^{\frac{n-1}{2}}}{\xi^n (n - 1)^{\frac{n}{2}}} \qquad \& \qquad
\eta_{\max} \sim e^{\frac{1}{\xi^2}} \frac{\xi}{2} \, \sqrt{n}, \label{eq:fmaxtail}
\end{equation}
and since $\xi$ is constant, 
\begin{equation}
\eta_0^{\frac{2}{n+1}} \sim \frac{(2 e)}{\xi^{2}} \, \frac{1}{n} \qquad \& \qquad
\eta_{\max}^{\frac{2}{n+1}} \sim 1. \label{eq:limfmax}
\end{equation}

\subsection{Operating Regime (OR)}
\label{sec:or}
%\subsection{Original Problem}

Denote by
\begin{equation}
\label{eq:arglw}
%Arg\mathcal{W}(n) = - \left(\frac{\eta}{\eta_0}\right)^{\frac{2}{n+1}}  e^{-\frac{2}{(n+1)\xi^2}}\frac{2}{(n+1)\xi^2}. 
\aw(n) \eqdef -e^{-1} \!\left(\frac{\eta}{\eta_{\max}}\right)^{\frac{2}{n+1}} \sim -e^{-1} \eta^{\frac{2}{n+1}},
\end{equation}
where the approximation is due to~\eqref{eq:limfmax} and let $\mathcal{W}_{0}(\cdot)$ and $\mathcal{W}_{-1}(\cdot)$ denote the Lambert W function of branches $0$ and $-1$ respectively. As can be seen in Fig.~\ref{fig:ratio-plot}, there are three regimes based on the value of the threshold $\eta$ in~\eqref{NPdet}: $0 < \eta \leq \eta_0$, $\eta_0 < \eta < \eta_{\max}$, and $\eta \geq \eta_{\max}$. As it will turn out to be the case, the only ``Operating Regime" (OR) for the problem at hand is the second one: $\eta_0 < \eta < \eta_{\max}$, for which $\ell(r) = \eta$ has two solutions,
\begin{align}
%r_1^2  &=& -\xi^2\left(\frac{n+1}{2}\right)\mathcal{W}_{0}\left[- \left(\frac{\eta}{\eta_0}\right)^{\frac{2}{n+1}}  e^{-\frac{2}{(n+1)\xi^2}}\frac{2}{(n+1)\xi^2}\right] - 1 \label{lw0}\\
r_1^2  & = -\xi^2\left(\frac{n+1}{2}\right)\mathcal{W}_{0} \left(\aw(n)\right) - 1 \label{lw0}\\
r_2^2   & = -\xi^2\left(\frac{n+1}{2}\right)\mathcal{W}_{-1} \left( \aw(n) \right) - 1.  \label{lw-1}
\end{align} 
We next compute $P_D$ and $P_F$ under the OR. We first compute $(1 - P_D)$:
\begin{align}
1- P_D &= \text{Pr}\left(\hat{H} = H_0| H_1\right) = \int_{r_1}^{r_2} p_{{\bf Y}|H}({\bf y}|H_1)\,d{\bf y} \nonumber\\
&= \frac{2}{\sqrt{\pi}} \frac{\Gamma\!\left(\frac{n+1}{2}\right)}{  \Gamma\!\left(\frac{n}{2}\right)} \int_{r_1}^{r_2} \left(1 + r^2\right)^{-\frac{n + 1}{2}} r^{n-1}\,dr \label{eq:pddfR2}\\
&=  \frac{2}{\sqrt{\pi}} \frac{\Gamma\!\left(\frac{n+1}{2}\right)}{  \Gamma\!\left(\frac{n}{2}\right)} \int_{r_1}^{r_2} \left(\frac{1}{r^2} + 1 \right)^{-\frac{n + 1}{2}} r^{-2}\,dr \nonumber\\
&= \frac{1}{\sqrt{\pi}} \frac{\Gamma\!\left(\frac{n+1}{2}\right)}{  \Gamma\!\left(\frac{n}{2}\right)}\int_{\frac{1}{r_2^2}}^{\frac{1}{r_1^2}} \left(1 + u\right)^{-\frac{n + 1}{2}} u^{-\frac{1}{2}}\,du, \label{eq:1-pd}
\end{align}
where we used the change of variable $u = \frac{1}{r^2}$ to write the last equality. As for $P_F$, we have:
\begin{align}
1 - P_F & = \text{Pr}\left(\hat{H} = H_0| H_0\right) = \int_{r_1}^{r_2} p_{{\bf Y}|H}({\bf y}|H_0)\,d{\bf y} \nonumber\\
& = \frac{2}{\xi^n \Gamma\!\left(\frac{n}{2}\right)} \int_{r_1}^{r_2} r^{n-1} e^{-\frac{r^2}{\xi^2}}\,dr \nonumber\\
& = \frac{1}{\Gamma\!\left(\frac{n}{2}\right)}  \int_{\frac{r_1^2}{\xi^2}}^{\frac{r_2^2}{\xi^2}}  r^{\frac{n}{2} - 1} e^{-r}\,dr \nonumber\\%\label{eq:part1pf}\\
%&= \frac{\left(\frac{n + 1}{2}\right)^{\frac{n}{2}}}{\Gamma\!\left(\frac{n}{2}\right)} \int_{\frac{2 r_1^2}{\xi^2(n+1)}}^{\frac{2r_2^2}{\xi^2(n+1)}}   x^{-\frac{3}{2}}e^{\left(\frac{n+1}{2}\right)(\ln x  - x)} \,dx, \label{eq:changevarLap}
&= \frac{\left(\frac{n}{2} - 1\right)^{\frac{n}{2}}}{\Gamma\!\left(\frac{n}{2}\right)} \int_{\frac{2 r_1^2}{\xi^2(n-2)}}^{\frac{2r_2^2}{\xi^2(n-2)}} e^{\left(\frac{n}{2} - 1\right)(\ln x  - x)} \,dx, \label{eq:changevarLap}
\end{align}
where equation~\eqref{eq:changevarLap} is due to the change of variable $r = (\frac{n}{2} - 1) x$. 
In order to determine the asymptotic behavior of both $P_D$ and $P_F$ at large values of $n$, , it is imperative to first analyze the growth rate of the intersection points under the OR. We identify three cases: 
\begin{itemize}
%\item \underline{Case $\eta = \eta_{\max}$}: we have
%\begin{equation*}
%r_1^2 = r_2^2 = r^2_{\max} = \xi^2 \frac{(n+1)}{2} -1.
%\end{equation*}
%\item \underline{Case $\eta = \eta_0$}: then $\aw(n) = - e^{-\frac{2}{(n+1)\xi^2}}\frac{2}{(n+1)\xi^2}$ and 
%\begin{multline}
%% \mathcal{W}_{0}\left[\aw(n)\right] &=& - \Theta\left(\frac{2}{(n+1)\xi^2}\right) \\
%\mathcal{W}_{-1} \! \left[\aw(n)\right] \sim -\ln \left( \! \frac{n+1}{2} \, \xi^2 \! \right) \\ \Longrightarrow \,  r_0^2 \sim \frac{\xi^2}{2} \, n \ln n,\label{tailr2eta0}
%\end{multline}
%where we used the fact that whenever $\aw(n) \rightarrow 0^{-}$, the following approximations hold~\cite[p. 350]{corless1996}:
%\begin{eqnarray}
%\mathcal{W}_{0}\left[\aw(n)\right] &\sim& \aw(n)\label{lw0R2}\\
%\mathcal{W}_{-1}\left[\aw(n)\right]
%&\sim& \ln\left(-\aw(n)\right). \label{lw-1R2} %- \ln\left(-\ln\left(- \aw(n)\right)\right). \label{lw-1R2}
%\end{eqnarray}
%%\item \underline{Regime 1}: $\eta < \eta_0$: $r_0$ is larger than the expression in~\eqref{tailr2eta0}, i.e., 
%\begin{equation*}
%\label{eq:rootR0}
%r_0^2 = \Omega \left(n \ln n\right).
%\end{equation*}
%\item \underline{Regime 2}: $\eta_ 0 < \eta < \eta_{\max}$: Three subcases to be noted:  
%\begin{itemize}
\item[(i)]  %$\left(\frac{\eta}{\eta_{\max}}\right)^{\frac{2}{n+1}} \underset{n \rightarrow \infty}{\rightarrow 1}$ 
$\eta^{\frac{2}{n+1}} \underset{n \rightarrow \infty}{\longrightarrow} 1$ : %This implies that Regime 2 is eventually (for $n$ large enough) the operating regime by virtue of the fact that $\eta_{\max} \rightarrow \infty$ and $\eta_0 \rightarrow 0$ as seen in equations~\eqref{eq:fmaxtail}. In this case, and since $\eta_{\max}^{\frac{2}{n+1}} \rightarrow 1$ as $n \rightarrow \infty$,  
This includes, for example, the case of $\eta$ fixed. Equation~(\ref{eq:arglw}) implies that $\aw(n) \underset{n \rightarrow \infty}{\longrightarrow} -e^{-1}$ and~~\cite[p. 350-351]{corless1996}
\begin{align*}
\mathcal{W}_{0}(z) & \sim -1 + \sqrt{2 \, (e \, z + 1)} \\
\mathcal{W}_{-1}(z) & \sim -1 - \sqrt{2 \, (e \, z + 1)},
\end{align*}
which gives 
\begin{align}
\frac{r_1^2}{\xi^2} \sim \frac{n + 1}{2}  \left(1 - \delta_n \right) 
\quad \& \quad
%\label{eq:tailR2r1rmax}\\ %\sim r^2_{\max} \left(1 - \sqrt{2} \sqrt{1-\left(\frac{\eta}{\eta_{\max}}\right)^{\frac{2}{n+1}}}\right) \label{eq:tailR2r1rmax}\\
\frac{r_2^2}{\xi^2} \sim  \frac{n+1}{2} \left(1 + \delta_n \right), \label{eq:tailR2r2rmax} %\sim r^2_{\max} \left(1 + \sqrt{2} \sqrt{1-\left(\frac{\eta}{\eta_{\max}}\right)^{\frac{2}{n+1}}}\right) 
\end{align}
where
\begin{align}
\delta_n \eqdef \, & \sqrt{2} \, \sqrt{1-\left(\frac{\eta}{\eta_{\max}}\right)^{\frac{2}{n+1}}} \sim 2 \, \sqrt{\frac{\ln \big( \frac{\eta_{\max}}{\eta}\big)}{n}}. \label{eq:deltan}
\end{align}
Note that in this case $\delta_n \underset{n \rightarrow \infty}{\longrightarrow} 0$
since $\lim_{n \rightarrow \infty} \left(\frac{\eta}{\eta_{\max}}\right)^{\frac{2}{n+1}} = 1$.
%where $C \eqdef \frac{\xi}{2} \, e^{\frac{1}{\xi^2}}$. This is true since by equation~(\ref{eq:fmaxtail}) we have $\eta_{\max} \sim \frac{e^{\frac{1}{\xi^2}} \xi}{2} \sqrt{n} =  C \sqrt{n}$ and
%\begin{equation*}
%1 - \big(\frac{C^{2}}{\eta^2} n\big)^{-\frac{1}{n+1}} \sim  \frac{\ln \big( C^{2} n \big)}{n}.
%\end{equation*}  
%Furthermore, 
%\begin{equation}
%1-\left(\frac{\tilde{\eta}}{\eta_{\max}}\right)^{\frac{2}{n+1}} \hspace{-5pt} \sim \frac{\ln \big(C^{2} n\big)}{n} \Rightarrow 
%\delta_n \sim \sqrt{2} \, \sqrt{\frac{\ln \big( C^{2} n \big)}{n}}, \label{eq:tightct}
%\end{equation}
%\item $-\frac{2}{n} \ln \eta = \ln \Theta\left(n\right)$: In this case, $ \eta = \Theta \left(\eta_0\right)$, which implies $\aw(n) = - \Theta\left(\frac{1}{n}\right) \rightarrow 0^-$. This gives
%\begin{eqnarray}
%r^2_1 &=& \Theta(1) \label{tailr1R21}\\
%r_2^2 &=& \Theta\left(n \ln n\right) \label{tailr2R21}
%\end{eqnarray}
%where we used equation~\eqref{lw-1R2} to write~\eqref{tailr2R21}. Equation~\eqref{tailr1R21} is due to the fact that~\cite[p. 350]{corless1996}: 
%\begin{eqnarray*}
%\mathcal{W}_{0}\left[\aw(n)\right] &\sim& \aw(n).
%\mathcal{W}_{-1}\left[\aw(n)\right] &\sim& -\ln \left(\frac{n+1}{2} \xi^2 \right),
%\end{eqnarray*}
\item[(ii)]$\eta^{\frac{2}{n+1}} \underset{n \rightarrow \infty}{\longrightarrow} \kappa$, $0 < \kappa <1$: In this case equation~\eqref{eq:arglw} implies that $\aw(n) \rightarrow \left(-\kappa \, e^{-1} \right)$ and:
\begin{eqnarray}
r_1^2  & \sim -\xi^2\left(\frac{n+1}{2}\right)\mathcal{W}_{0} \left(-\kappa \, e^{-1}\right) - 1 \label{eq:caseii0}\\
r_2^2   & \sim -\xi^2\left(\frac{n+1}{2}\right)\mathcal{W}_{-1} \left( -\kappa \, e^{-1} \right) - 1.  \label{eq:caseii-1}
\end{eqnarray}
%Whenever $-e^{-1} < x < 0$, we have the following bounds on $\mathcal{W}_{0}(x)$~\cite[Th. 3.2]{solszn2022} and $\mathcal{W}_{-1}(x)$~\cite[Th. 1]{chatzi2013}:  
% \begin{align*}
% &\sqrt{ex + 1} - 1 \leq \mathcal{W}_{0}(x) \leq \sqrt[3]{ex + 1} - 1\\
% &-1 - \sqrt{-2(1 + \ln (-x))} + \frac{2}{3} (1 + \ln (-x)) \\
% &\leq \mathcal{W}_{-1}(x) \leq \\
% & -1 - \sqrt{-2(1 + \ln (-x))} + (1 + \ln (-x))
% \end{align*}
% Using these bounds in equations~\eqref{lw0} and~\eqref{lw-1}, obtain: 
% \begin{align}
% &1 - \sqrt[3]{-\kappa + 1} \leq \frac{2(r_1^2 + 1)}{\xi^2 (n + 1)} \leq  1 - \sqrt{-\kappa + 1} \label{eq:r1Rct}\\
% &1 + \sqrt{-2 \ln(\kappa)} - \frac{2}{3}\ln \kappa \leq \frac{2(r_2^2 + 1)}{\xi^2 (n + 1)} \nonumber\\
% &\qquad \qquad \qquad \qquad \leq 1 + \sqrt{-2 \ln(\kappa)} - \ln \kappa \label{eq:r2Rct}
% \end{align}
\item[(iii)] $\eta^{\frac{2}{n+1}} \underset{n \rightarrow \infty}{\longrightarrow} 0$  which implies $\aw(n) \underset{n \rightarrow \infty}{\longrightarrow} 0^-$. Whenever $\aw(n) \rightarrow 0^{-}$, the following approximations hold~\cite[p. 350]{corless1996}:
\begin{eqnarray*}
\mathcal{W}_{0}\left[\aw(n)\right] &\sim& \aw(n)\label{lw0R2}\\
\mathcal{W}_{-1}\left[\aw(n)\right]
&\sim& \ln\left(-\aw(n)\right), \label{lw-1R2} %- \ln\left(-\ln\left(- \aw(n)\right)\right). \label{lw-1R2}
\end{eqnarray*}
Thus, equations~(\ref{lw0}) and~(\ref{lw-1}) give 
\begin{eqnarray}
r^2_1 &\sim& \frac{\xi^2 e^{-1}}{2}\,(n + 1)\,\eta^{\frac{2}{n+1}} - 1 \label{tailr1R22}\\
r_2^2 &\sim& \frac{\xi^2 }{2}\, (n + 1) \, \ln \eta^{-\frac{2}{n+1}}. \label{tailr2R22}
\end{eqnarray}
%where we used equation~\eqref{lw-1R2} to write~\eqref{tailr2R22}. Equation~\eqref{tailr1R22} is due to the fact that~\cite[p. 350]{corless1996}: 
%\begin{eqnarray*}
%\mathcal{W}_{0}\left[\aw(n)\right] &\sim& \aw(n).
%\mathcal{W}_{-1}\left[\aw(n)\right] &\sim& -\ln \left(\frac{n+1}{2} \xi^2 \right),
%\end{eqnarray*}
\end{itemize}
%which implies that: 
%\begin{equation*}
%- e^{-1} \leq \aw(n) < - e^{-\frac{2}{(n+1)\xi^2}}\frac{2}{(n+1)\xi^2}.
%\end{equation*}
%This gives
%\begin{align}
%&-1 < \mathcal{W}_{0}\left[\aw(n)\right] \leq \mathcal{W}_{0}\left[- e^{-\frac{2}{(n+1)\xi^2}}\frac{2}{(n+1)\xi^2}\right]\nonumber\\
%&\mathcal{W}_{-1}\left[- e^{-\frac{2}{(n+1)\xi^2}}\frac{2}{(n+1)\xi^2}\right] \leq \mathcal{W}_{-1}\left[\aw(n)\right] < -1.\label{eq:bdsR2}
%\end{align}
%It is known that whenever $\aw(n) \rightarrow 0^{-}$, the following approximations hold~\cite[p. 350]{corless1996} 
%\begin{eqnarray}
%\mathcal{W}_{0}\left[\aw(n)\right] &\sim& \aw(n) \label{lw0R2}\\
%\mathcal{W}_{-1}\left[\aw(n)\right] &\sim& \ln\left(-\aw(n)\right)- \ln\left(-\ln\left(- \aw(n)\right)\right) \label{lw-1R2}
%\end{eqnarray}
%Therefore equations~\eqref{eq:bdsR2} imply: 
%\begin{eqnarray*}
 %\mathcal{W}_{0}\left[\aw(n)\right] &\leq& - \Theta\left(e^{-\frac{2}{(n+1)\xi^2}}\frac{2}{(n+1)\xi^2}\right) \\
%\mathcal{W}_{-1}\left[\aw(n)\right] &\geq& -\Theta\left(\ln \left(\frac{n+1}{2} \xi^2 \right)\right),
%\end{eqnarray*}
%which implies by virtue of equation~\eqref{lw-1} that: 
%\begin{eqnarray}
%r_1^2 &<& r^2_{\max} \nonumber\\
%r^2_{\max} < r_2^2 &\leq& \Theta\left( n \ln n \right).
%\label{eq:rootR2}
%\end{eqnarray}
%\end{itemize} 
For the remaining of this paper, we refer to the preceding three cases of the OR by (R-i), (R-ii), and (R-iii). 
%In what follows, we will show that for $n$ large enough, Regime 2 will be ruled out implying that the detector will be operating under Regime 1. To this end, we start by assuming that Regime 2 is the operating regime for $n$ large: we have that $\eta_0 \leq \eta < \eta_{\max}$ which implies using equation~\eqref{eq:fmax} that 
%\begin{equation}
%1 \leq \frac{\eta}{\eta_0} <   e^{\frac{1}{\xi^2} - \frac{(n+1)}{2}} \left(\frac{n+1}{2} \, \xi^2\right)^\frac{n+1}{2}.
%\label{etaR2}
%\end{equation}
%Denote the argument of the Lambert W function by: 
%\begin{equation}
%\label{eq:arglw}
%\aw(n) = - \left(\frac{\eta}{\eta_0}\right)^{\frac{2}{n+1}}  e^{-\frac{2}{(n+1)\xi^2}}\frac{2}{(n+1)\xi^2}. 
%\aw(n) = -e^{-1} \left(\frac{\eta}{\eta_{\max}}\right)^{\frac{2}{n+1}}.
%\end{equation}
%Hence equation~\eqref{etaR2} gives

%\begin{equation*}
%- e^{-1} < \aw(n) \leq - e^{-\frac{2}{(n+1)\xi^2}}\frac{2}{(n+1)\xi^2},
%\end{equation*}
%which implies
%\begin{align}
%&-1 < \mathcal{W}_{0}\left[\aw(n)\right] \leq \mathcal{W}_{0}\left[- e^{-\frac{2}{(n+1)\xi^2}}\frac{2}{(n+1)\xi^2}\right]\nonumber\\
%&\mathcal{W}_{-1}\left[- e^{-\frac{2}{(n+1)\xi^2}}\frac{2}{(n+1)\xi^2}\right] \leq \mathcal{W}_{-1}\left[\aw(n)\right] < -1.\label{eq:bdsR2}
%\end{align}
%\begin{equation}
%- \frac{\gamma^2}{\sigma^2(n+1)} + \ln \frac{\gamma^2}{\sigma^2} - \ln (n+1) \leq \ln \left|\aw(n)\right| < -1 , \label{upreg2argw}
%\end{equation}

%%%%%%%%%%%%%%%%%%%%%%%%%%%%%%%%%%%%%%%%%%%%%%%%%%%%%%%%%%%%%%%%%%%%%%%%%%%%%%%%
\subsection{The Bayesian Detection Problem; Proof}
We start first by stating a technical lemma. 
\begin{lemma}
\label{lem:asymptotics}
Whenever $\left(\frac{\eta}{\eta_{\max}}\right)^{\frac{2}{n+1}} \underset{n \rightarrow \infty}{\longrightarrow} 1$, the following hold: 
\begin{align*}
1- P_D & \sim \frac{2}{\sqrt{\pi}} \frac{\sqrt{\ln\left(\frac{\eta_{\max}}{\eta}\right)}}{\eta_{\max}}\\
P_F & \sim 2 \, Q\left(\sqrt{2 \ln \left( \frac{\eta_{\max}}{\eta}\right)}\right),
%P_e = \pi_G P_F + \pi_C (1 - P_D) & \sim \sqrt{\frac{2}{\pi}} \, \frac{\pi_C}{\tilde{\eta}}\sqrt{\frac{\ln \big( C^{2} n \big)}{ \big( C^{2} n \big)}}. % \label{eq:cherinfo}
\end{align*}
where $Q(\cdot)$ is the standard Gaussian Q-function. 
\end{lemma}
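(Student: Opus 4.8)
The plan is to feed the two exact radial integrals --- \eqref{eq:1-pd} for $1-P_D$ and \eqref{eq:changevarLap} for $1-P_F$ --- into the case (R-i) crossing-point asymptotics \eqref{eq:tailR2r2rmax}, noting that the hypothesis $(\eta/\eta_{\max})^{2/(n+1)}\to 1$ is precisely regime (R-i) and forces $\delta_n\to 0$. Throughout I will use two standard estimates: the Gamma ratio $\Gamma\!\left(\frac{n+1}{2}\right)/\Gamma\!\left(\frac{n}{2}\right)\sim\sqrt{n/2}$ and the already-established $\eta_{\max}\sim e^{1/\xi^2}\frac{\xi}{2}\sqrt{n}$ from \eqref{eq:fmaxtail}. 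The latter is what lets me convert the final answers, written first in terms of $\delta_n\sim 2\sqrt{\ln(\eta_{\max}/\eta)/n}$, into the stated forms involving $\eta_{\max}$.

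For $1-P_D$ I would rescale the variable in \eqref{eq:1-pd} by $u=\tfrac{2v}{\xi^2(n+1)}$, so that $(1+u)^{-(n+1)/2}\to e^{-v/\xi^2}$ while, by \eqref{eq:tailR2r2rmax}, the limits become $v\in\big[\tfrac{1}{1+\delta_n},\tfrac{1}{1-\delta_n}\big]$ --- an interval collapsing onto $v=1$. Since both $e^{-v/\xi^2}$ and $v^{-1/2}$ are asymptotically equal to their $v=1$ values across this shrinking window, the integral is asymptotically the window width times the integrand at $v=1$. The one delicate computation is the window width $\tfrac{1}{r_1^2}-\tfrac{1}{r_2^2}$: rather than subtract two leading-order equivalents I would write it as $\tfrac{r_2^2-r_1^2}{r_1^2 r_2^2}$, where the numerator $r_2^2-r_1^2=-\xi^2\tfrac{n+1}{2}\big[\mathcal{W}_{-1}(\aw(n))-\mathcal{W}_{0}(\aw(n))\big]\sim\xi^2(n+1)\delta_n$ is a clean difference (the additive $-1$'s in \eqref{lw0}--\eqref{lw-1} cancel exactly) and the denominator needs only its leading order $\big(\xi^2\tfrac{n+1}{2}\big)^2$. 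Assembling the Gamma ratio, the width, and the integrand, then substituting $\delta_n$ and $\eta_{\max}$, gives the claimed $1-P_D\sim\tfrac{2}{\sqrt\pi}\sqrt{\ln(\eta_{\max}/\eta)}/\eta_{\max}$.

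For $P_F$ I would run a Laplace analysis on \eqref{eq:changevarLap}, observing that its exponent $(\tfrac{n}{2}-1)(\ln x-x)$ is maximized at $x=1$ and that its prefactor is exactly the constant making the integral over $(0,\infty)$ equal $1$. Under (R-i) the endpoints obey $x_1\to1^-$ and $x_2\to1^+$, so I substitute $x=1+t/\sqrt{\tfrac{n}{2}-1}$, Taylor-expand $\ln x-x=-1-\tfrac{(x-1)^2}{2}+\cdots$, and check the prefactor tends to $1/\sqrt{2\pi}$, yielding $1-P_F\sim\tfrac{1}{\sqrt{2\pi}}\int_{t_1}^{t_2}e^{-t^2/2}\,dt$. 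The rescaled endpoints are $t_{1,2}\approx\mp\,\delta_n\sqrt{n/2}=\mp\sqrt{2\ln(\eta_{\max}/\eta)}$ --- the same arithmetic met in the $1-P_D$ step --- so the complementary event splits into two symmetric tails and $P_F=1-(1-P_F)\sim 2\,Q\!\big(\sqrt{2\ln(\eta_{\max}/\eta)}\big)$.

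The step I expect to fight hardest is turning the pointwise convergences into genuine $\sim$ statements uniformly in $n$. Two specific issues arise. First, in the $P_F$ Laplace step the rescaled endpoints $t_{1,2}=\Theta\!\big(\sqrt{\ln(\eta_{\max}/\eta)}\big)$ may diverge, since (R-i) only guarantees $\ln(\eta_{\max}/\eta)=o(n)$; controlling the discarded cubic term $\tfrac{t^3}{3\sqrt{n/2-1}}$ in the exponent requires the stronger $\ln(\eta_{\max}/\eta)=o(n^{1/3})$. Second, in the $1-P_D$ step the window-width argument needs $n\delta_n\to\infty$ so that the additive $O(1)$ corrections to $r_{1,2}^2$ are negligible against the $\Theta(n\delta_n)$ separation. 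Both conditions are comfortably satisfied where the lemma is actually invoked --- the Bayesian detector uses a constant $\eta$, for which $\ln(\eta_{\max}/\eta)\sim\tfrac{1}{2}\ln n$ --- so the cleanest route is to keep explicit error bounds valid throughout this moderate-deviation range rather than lean on bare limits.
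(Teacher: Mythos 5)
Your proposal is correct and follows essentially the same route as the paper: for $1-P_D$ it replaces $(1+u)^{-(n+1)/2}$ by its exponential limit and evaluates the resulting integral over the shrinking window of width $\frac{1}{r_1^2}-\frac{1}{r_2^2}\sim \frac{4\delta_n}{\xi^2(n+1)}$ (the paper packages this same computation as a Taylor expansion of the upper incomplete Gamma function between the two endpoints, cf.\ \eqref{eq:taylorGamma}--\eqref{eq:Gammaderiv} and Appendix~\ref{app:proofeq}), and for $P_F$ it is the identical Laplace expansion of \eqref{eq:changevarLap} about $x=1$ with rescaled endpoints $\mp\sqrt{2\ln(\eta_{\max}/\eta)}$ as in \eqref{eq:Lapmeth}--\eqref{eq:tailQ}. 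Your closing remarks on uniformity of the error terms over the moderate-deviation range $\ln(\eta_{\max}/\eta)=\Theta(\ln n)$ are, if anything, more careful than the paper's treatment.
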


\begin{proof}
The detector is operating under Regime (R-i). Using the expression of $(1 - P_D)$ in~(\ref{eq:1-pd}), it is shown in equation~\eqref{eq:bdR2fixed} in Appendix~\ref{app:proofeq} that:
\begin{align}
&1- P_D \nonumber\\%= \text{Pr}\left(\hat{H} = H_0| H_1\right) = \int_{r_1}^{r_2} p_{{\bf Y}|H}({\bf y}|H_1)\,d{\bf y} \nonumber\\
%&= \frac{2}{\sqrt{\pi}} \frac{\Gamma\!\left(\frac{n+1}{2}\right)}{  \Gamma\!\left(\frac{n}{2}\right)} \int_{r_1}^{r_2} \left(1 + r^2\right)^{-\frac{n + 1}{2}} r^{n-1}\,dr \label{eq:pddfR2}\\
%&=  \frac{2}{\sqrt{\pi}} \frac{\Gamma\!\left(\frac{n+1}{2}\right)}{  \Gamma\!\left(\frac{n}{2}\right)} \int_{r_1}^{r_2} \left(\frac{1}{r^2} + 1 \right)^{-\frac{n + 1}{2}} r^{-2}\,dr \nonumber\\
%&= \frac{1}{\sqrt{\pi}} \frac{\Gamma\!\left(\frac{n+1}{2}\right)}{  \Gamma\!\left(\frac{n}{2}\right)}\int_{\frac{1}{r_2^2}}^{\frac{1}{r_1^2}} \left(1 + u\right)^{-\frac{n + 1}{2}} u^{-\frac{1}{2}}\,du, \nonumber\\
& \sim \frac{1}{\sqrt{\pi}} \frac{\Gamma\!\left(\frac{n+1}{2}\right)}{  \Gamma\!\left(\frac{n}{2}\right)} \frac{\sqrt{2}}{\sqrt{n+1}} \left[\Gamma\!\left(\frac{1}{2},\frac{n + 1}{2 r^2_2}\right) - \Gamma\!\left(\frac{1}{2},\frac{n + 1}{2 r^2_1}\right)\right], \nonumber
\end{align}
which gives
%where we used the change of variable $u = \frac{1}{r^2}$ to write the last equality. 
%As shown in Appendix~\ref{app:proofeq}, one can write 
\begin{align}
& 1- P_D \nonumber\\%\sim \frac{1}{\sqrt{\pi}} \frac{\Gamma\!\left(\frac{n+1}{2}\right)}{  \Gamma\!\left(\frac{n}{2}\right)}\int_{\frac{1}{r_2^2}}^{\frac{1}{r_1^2}} e^{-\frac{n + 1}{2} u} u^{-\frac{1}{2}}\,du  \label{eq:bdR2fixed}\\
%&= \frac{1}{\sqrt{\pi}} \frac{\Gamma\!\left(\frac{n+1}{2}\right)}{  \Gamma\!\left(\frac{n}{2}\right)} \frac{\sqrt{2}}{\sqrt{n+1}} \left[\Gamma\!\left(\frac{1}{2},\frac{n + 1}{2 r^2_2}\right) - \Gamma\!\left(\frac{1}{2},\frac{n + 1}{2 r^2_1}\right)\right] \nonumber\\
&\sim \sqrt{\frac{2}{\pi}} \frac{\Gamma\!\left(\frac{n+1}{2}\right)}{  \Gamma\!\left(\frac{n}{2}\right)} \frac{1}{\sqrt{n+1}} \left[ \frac{n + 1}{2 r_2^2} - \frac{n + 1}{2 r_1^2}\right] \Gamma^{'}\!\!\left(\frac{1}{2},\frac{1}{\xi^2}\right) \label{eq:taylorGamma}\\
%&\sim& \sqrt{\frac{2}{\pi}} \frac{\Gamma\left(\frac{n+1}{2}\right)}{  \Gamma\left(\frac{n}{2}\right)} \frac{1}{\xi^2 \sqrt{n+1}} \left(\frac{1}{\left(1 + \sqrt{2} \sqrt{1-\left(\frac{\eta}{\eta_{\max}}\right)^{\frac{2}{n+1}}}\right)} - \frac{1}{\left(1 - \sqrt{2} \sqrt{1-\left(\frac{\eta}{\eta_{\max}}\right)^{\frac{2}{n+1}}}\right)}\right) \\
&\sim \sqrt{\frac{2}{\pi}} \frac{\Gamma\!\left(\frac{n+1}{2}\right)}{ \Gamma\!\left(\frac{n}{2}\right)} \frac{1}{\xi^2 \sqrt{n+1}} \, \frac{2 \delta_n }{1 - \delta_n^2} \, \xi e^{-\frac{1}{\xi^2}} \label{eq:Gammaderiv}\\
%&\sim \frac{2}{\sqrt{\pi}} \frac{1}{\xi} e^{-\frac{1}{\xi^2}} \delta_n \label{eq:deltanbehav}\\
&\sim \frac{4}{\sqrt{\pi}} \frac{1}{\xi} e^{-\frac{1}{\xi^2}} \sqrt{\frac{\ln\left(\frac{\eta_{\max}}{\eta}\right)}{n}} \label{eq:deltanbehav}\\
&\sim \frac{2}{\sqrt{\pi}} \frac{\sqrt{\ln\left(\frac{\eta_{\max}}{\eta}\right)}}{\eta_{\max}}, \label{eq:finalanswer}
%&\sim \sqrt{\frac{2}{\pi}} \, \frac{2}{\xi} e^{-\frac{1}{\xi^2}}\sqrt{\frac{\ln \big( C^{2} n \big)}{n}} 
%= \sqrt{\frac{2}{\pi}} \, \frac{1}{\tilde{\eta}} \sqrt{\frac{\ln \big( C^{2} n \big)}{\big( C^{2} n\big)}}, \label{eq:deltanbehav}
%&\leq& \frac{1}{\sqrt{\pi}} \frac{\Gamma\left(\frac{n+1}{2}\right)}{  \Gamma\left(\frac{n}{2}\right)} \left(\kappa_2 n \ln n\right)^{\frac{1- \delta}{2}} \frac{2}{n+\delta-1} \left(1 + \frac{1}{\kappa_2 \, n \ln n}\right)^{-\frac{n+\delta-1}{2}},\label{contrinv}
%&=&\frac{1}{\sqrt{\pi}} \frac{\Gamma\left(\frac{n+1}{2}\right)}{  \Gamma\left(\frac{n}{2}\right)} (n \ln n)^{\frac{n}{2}} \int_{0}^{k} \left(1 + n \ln n u \right)^{-\frac{n + 1}{2}}  u^{\frac{n-2}{2}}\,du.
\end{align}
where we used equation~(\ref{eq:fmaxtail}) in order to write equation~(\ref{eq:finalanswer}). Equation~\eqref{eq:taylorGamma} is due to Taylor's approximation of the upper incomplete Gamma function $\Gamma\!\left(\frac{1}{2},x\right)$ around $x = \frac{1}{\xi^2}$ and we used the fact that $\frac{d}{dx} \Gamma(m,x) = -x^{m-1} e^{-x}$ to write equation~\eqref{eq:Gammaderiv}. Equation~\eqref{eq:deltanbehav} is due to equation~\eqref{eq:deltan} and to the fact that
\begin{equation*}
\frac{\Gamma\!\left(\frac{n+1}{2}\right)}{\Gamma\!\left(\frac{n}{2}\right)} \sim \sqrt{\frac{n}{2}}.
\end{equation*}
This concludes the first part of the proof. When it comes to $(1 - P_F)$, its expression is given by~(\ref{eq:changevarLap}): 
%we use again Taylor's approximation around $\frac{1}{2} (n + 1)$ to obtain: 
%\begin{align*}
%1 - P_F & = \text{Pr}\left(\hat{H} = H_0| H_0\right) = \int_{r_1}^{r_2} p_{{\bf Y}|H}({\bf y}|H_0)\,d{\bf y} \\
%& = \frac{2}{\xi^n \Gamma\!\left(\frac{n}{2}\right)} \int_{r_1}^{r_2} r^{n-1} e^{-\frac{r^2}{\xi^2}}\,dr.
%\end{align*}
%&\sim& 1 - \frac{1}{\Gamma(\frac{n}{2}} \left(\frac{r^2_1 - r^2_2}{\xi^2}\right) \Gamma^{'}\left(\frac{n}{2},\frac{n + 1}{2}\right) \nonumber\\
%&=& 1 - \frac{\sqrt{2}}{\Gamma(\frac{n}{2}} (n + 1) \sqrt{1-\left(\frac{\eta}{\eta_{\max}}\right)^{\frac{2}{n+1}}} \left(\frac{n  + 1}{2}\right)^{\frac{n}{2} - 1} e^{-\frac{n  + 1}{2}} \label{eq:derivfixedR}\\
%Using the approximations of $\frac{r_1^2}{\xi^2}$ and  $\frac{r_2^2}{\xi^2}$  as given by equations~(\ref{eq:tailR2r2rmax}), we write:
\begin{align}
&1 - P_F 
%= \frac{1}{\Gamma\!\left(\frac{n}{2}\right)}  \int_{\frac{r_1^2}{\xi^2}}^{\frac{r_2^2}{\xi^2}}  r^{\frac{n}{2} - 1} e^{-r}\,dr \label{eq:part1pf}\\
%\frac{1}{\Gamma\left(\frac{n}{2}\right)} \Gamma\left(\frac{n}{2}, \frac{r_1^2}{\xi^2}\right) \label{eq:part1pf}\\
%&\sim \frac{1}{\Gamma\!\left(\frac{n}{2}\right)} \int_{\frac{n + 1}{2} (1 - \delta_n)}^{\frac{n + 1}{2} (1 + \delta_n)}  r^{\frac{n}{2} - 1} e^{-r}\,dr \label{eq:justtilde} \\
%&=& \frac{1}{\Gamma\left(\frac{n}{2}\right)} (n \ln n)^{\frac{n}{2}} \int_{k}^{+\infty} r^{\frac{n}{2} - 1} e^{-n \ln n \, r } \,dr
%= \frac{\left(\frac{n + 1}{2}\right)^{\frac{n}{2}}}{\Gamma\!\left(\frac{n}{2}\right)} \int_{\frac{2 r_1^2}{\xi^2(n+1)}}^{\frac{2r_2^2}{\xi^2(n+1)}}   x^{-\frac{3}{2}}e^{\left(\frac{n+1}{2}\right)(\ln x  - x)} \,dx \nonumber\\
=  \frac{\left(\frac{n}{2} - 1\right)^{\frac{n}{2}}}{\Gamma\!\left(\frac{n}{2}\right)} \int_{\frac{2 r_1^2}{\xi^2(n-2)}}^{\frac{2r_2^2}{\xi^2(n-2)}} e^{\left(\frac{n}{2} - 1\right)(\ln x  - x)} \,dx \nonumber\\
%& \leq  \frac{\left(\frac{n + 1}{2}\right)^{\frac{n}{2}}}{\Gamma\!\left(\frac{n}{2}\right)} \left(\frac{2 r_1^2}{\xi^2(n+1)}\right)^{-\frac{3}{2}} \int_{\frac{2 r_1^2}{\xi^2(n+1)}}^{\frac{2r_2^2}{\xi^2(n+1)}}  e^{\left(\frac{n+1}{2}\right)(\ln x  - x)} \,dx \label{eq:lwbdint}\\
&\sim \frac{\left(\frac{n}{2} - 1\right)^{\frac{n}{2}} e^{-(\frac{n}{2} - 1)}}{\Gamma\!\left(\frac{n}{2}\right)} \int_{\frac{2 r_1^2}{\xi^2(n-2)}}^{\frac{2r_2^2}{\xi^2(n-2)}}   e^{-\left(\frac{n}{2} - 1\right)\frac{(x-1)^2}{2}} \,dx \label{eq:Lapmeth}\\
&=\frac{\left(\frac{n}{2} - 1\right)^{\frac{n}{2}} e^{-(\frac{n}{2} - 1)}}{\Gamma\!\left(\frac{n}{2}\right)} \sqrt{\frac{4\pi}{n - 2}} \nonumber \\
%&= \frac{\left(\frac{n+1}{2}\right)^{\frac{n}{2}} e^{-(\frac{n+1}{2})}}{\Gamma\!\left(\frac{n}{2}\right)}\sqrt{\frac{4\pi}{n +1}} \left[ Q\!\left(\frac{\frac{n + 1}{n - 2} (1 - \delta_n) - 1}{\sqrt{\frac{2}{n - 2}}}\right) \right.\nonumber \\
&\qquad \qquad \qquad  \left[ Q\!\left(\frac{\frac{2 r_1^2}{\xi^2(n-2)}- 1}{\sqrt{\frac{2}{n - 2}}}\right) -Q \left(\frac{\frac{2 r_2^2}{\xi^2(n-2)}- 1}{\sqrt{\frac{2}{n - 2}}}\right)\right] \nonumber\\
&\sim 1 - 2 Q\left(\sqrt{2 \ln \left( \frac{\eta_{\max}}{\eta}\right)}\right), \label{eq:tailQ}
%&\sim 1 - \sqrt{\frac{2}{\pi}} \frac{1}{\sqrt{\big( C^{2} n \big) \ln \big( C^{2} n \big)}}, \nonumber
\end{align}
where in order to write the last equation we used Stirling's approximation $\Gamma\!\left(\frac{n}{2}\right) \sim \sqrt{2 \pi \left(\frac{n}{2} - 1\right)} \left(\frac{\frac{n}{2} - 1}{e}\right)^{\frac{n}{2}-1}$ and the fact that: 
\begin{eqnarray*}
%\frac{2 r_1^2}{\xi^2(n+1)} &\underset{n \rightarrow \infty}{\longrightarrow}& 1\\
\frac{\frac{2 r_1^2}{\xi^2(n-2)}- 1}{\sqrt{\frac{2}{n - 2}}} &\sim& -\sqrt{2 \ln \left( \frac{\eta_{\max}}{\eta}\right)} \\
\frac{\frac{2 r_2^2}{\xi^2(n-2)}- 1}{\sqrt{\frac{2}{n - 2}}} &\sim& \sqrt{2 \ln \left( \frac{\eta_{\max}}{\eta}\right)},
\end{eqnarray*}
as given by equations~(\ref{eq:tailR2r2rmax}) and~(\ref{eq:deltan}). 
%in addition to the following properties: 
%\begin{eqnarray}
%\frac{\frac{n + 1}{n - 2} (1 - \delta_n) - 1}{\sqrt{\frac{2}{n - 2}}} \hspace{-7pt}&\sim&\hspace{-7pt} - \sqrt{\ln \big( C^{2} \, n \big)} \nonumber\\
%\eta_{\max} &\sim& \sqrt{n} \label{eq:fmaxjust}\\
%Q(x) \hspace{-7pt}&\underset{x \rightarrow -\infty}{\sim}&\hspace{-7pt}  1 + \frac{1}{\sqrt{2 \pi}} \frac{e^{-\frac{x^2}{2}}}{x}\label{eq:Qapproxneg}\\
%Q(x) \hspace{-7pt}&\underset{x \rightarrow \infty}{\sim}&\hspace{-7pt} \frac{1}{\sqrt{2 \pi}} \frac{e^{-\frac{x^2}{2}}}{x}.\nonumber
%\end{eqnarray}
%Equation~\eqref{eq:justtilde} is justified by virtue of the fact that $\Gamma\left(\frac{n}{2},x\right)$ is continuous in $x$, 
%Equation~\eqref{eq:changevarLap} is due to the change of variable $r = (\frac{n+1}{2}) x$ and 
Equation~\eqref{eq:Lapmeth} is due to applying the Laplace approximation method~\cite{butler_2007} around $x = 1$. This concludes the proof of the lemma.
%An exactly similar upperbound can be reached by replacing $\frac{2 r_1^2}{\xi^2(n+1)}$ by $\frac{2 r_2^2}{\xi^2(n+1)}$ in equation~(\ref{eq:lwbdint}). 
\end{proof}

We now derive Theorem~\ref{th:cherinfo} which states
\Bayesian*

\begin{proof}
Since the MAP detector is a LRT~\eqref{NPdet} for $\tilde{\eta} = \frac{\pi_C}{\pi_G} > 0$, for $n$ large enough and $\eta = \tilde{\eta}$ fixed, the detector~\eqref{NPdet} will be operating under Regime (R-i) and hence the expressions of $(1 - P_D)$ and $P_F$ of Lemma~\ref{lem:asymptotics} do apply. Replacing  $\eta_{\max}$ by its asymptotic behavior as characterized by equation~(\ref{eq:fmaxtail}) yields the required expression for $(1 - P_D)$. The same applies to $P_F$ after using the following approximation~\cite{borjesson79}: 
\begin{equation}
Q(x) \underset{x \rightarrow \infty}{\sim} \frac{1}{\sqrt{2 \pi}} \frac{e^{-\frac{x^2}{2}}}{x}.\label{eq:Qapproxneg}
\end{equation}
\end{proof}

\subsection{The Neyman-Pearson Formulation; Proof}

When it comes to the Neyman-Pearson criteria, we first estalish the following lemma:
\begin{lemma}
\label{lem:regime}
For $n$ large enough, the optimal Neyman-Pearson LRT detectors operate,
\begin{itemize}
\item under Regime (R-i) {\em whenever $P_F = \epsilon$\/},
\item under Regime (R-ii), {\em whenever $(1 - P_D) = \epsilon$\/}. 
\end{itemize}
\end{lemma}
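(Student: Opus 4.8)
The plan is to use the fact that the three regimes are distinguished solely by $\lim_n \eta^{2/(n+1)}$, so that locating the growth rate of the threshold $\eta$ that meets each Neyman-Pearson constraint immediately identifies its regime. The two ingredients are the monotonicity of the error probabilities in $\eta$ and the small-threshold asymptotics already recorded in the excerpt. First I would record the geometry of Fig.~\ref{fig:ratio-plot}: on the operating regime $\eta\in(\eta_0,\eta_{\max})$ the $H_0$-acceptance interval $[r_1,r_2]$ shrinks strictly to $\{r_{\max}\}$ as $\eta\uparrow\eta_{\max}$ and widens to $[0,\infty)$ as $\eta\downarrow 0$. Hence $\eta\mapsto P_F = 1-\Pr(r\in[r_1,r_2]\mid H_0)$ is strictly increasing and continuous from $0$ to $1$, while $\eta\mapsto 1-P_D=\Pr(r\in[r_1,r_2]\mid H_1)$ is strictly decreasing and continuous from $1$ to $0$. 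Thus each constraint has a unique root $\eta^\ast(n)$, and it only remains to pin its regime.

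For $P_F=\epsilon$ I would exhibit the candidate $\eta_n=\eta_{\max}\,e^{-\frac12\left(Q^{-1}(\epsilon/2)\right)^2}$. Since $(\eta_n/\eta_{\max})^{2/(n+1)}\to1$, Lemma~\ref{lem:asymptotics} applies to it and gives $P_F(\eta_n)\sim 2Q\!\left(\sqrt{2\ln(\eta_{\max}/\eta_n)}\right)=\epsilon$, with $\eta_n\sim\text{const}\cdot\eta_{\max}=\Theta(\sqrt n)$. A monotonicity squeeze --- comparing $\eta^\ast$ with the sequences $c'\eta_{\max}$ for $c'$ slightly below and above $c(\epsilon)\coloneqq e^{-\frac12\left(Q^{-1}(\epsilon/2)\right)^2}$, and using that $2Q\!\left(\sqrt{2\ln(1/c')}\right)$ is strictly monotone in $c'$ --- then yields $\eta^\ast\sim c(\epsilon)\,\eta_{\max}$, so $\eta^{\ast\,2/(n+1)}\to1$ and Case~1 operates under (R-i).

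For $1-P_D=\epsilon$ I would first exclude (R-i): there Lemma~\ref{lem:asymptotics} gives $1-P_D\sim\frac{2}{\sqrt\pi}\sqrt{\ln(\eta_{\max}/\eta)}/\eta_{\max}\to0$ because $\eta_{\max}\to\infty$, so no fixed $\epsilon\in(0,1)$ is reachable. I would then evaluate $1-P_D$ under (R-ii): substituting $r_1^2,r_2^2=\Theta(n)$ from~\eqref{eq:caseii0}--\eqref{eq:caseii-1} into~\eqref{eq:1-pd} and rescaling $u=\frac{2v}{\xi^2(n+1)}$ turns the integral, after dominated convergence, into $\frac{1}{\sqrt\pi\,\xi}\int_{1/b(\kappa)}^{1/a(\kappa)} e^{-v/\xi^2}v^{-1/2}\,dv$, which I write as $g(\kappa)$, where $a(\kappa)=-\mathcal W_0(-\kappa e^{-1})$ and $b(\kappa)=-\mathcal W_{-1}(-\kappa e^{-1})$. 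Using $a,b\to1$ as $\kappa\to1$ (so $g\to0$) and $a\to0^+,\,b\to\infty$ as $\kappa\to0$ (so $g\to1$), together with strict monotonicity of $g$ and the intermediate value theorem, $g$ is a continuous decreasing bijection of $(0,1)$ onto $(0,1)$. Hence a unique $\kappa(\epsilon)$ satisfies $g(\kappa(\epsilon))=\epsilon$, and the same monotonicity squeeze --- now comparing $\eta^\ast$ to the sequences $\kappa'^{(n+1)/2}$ for $\kappa'$ slightly below and above $\kappa(\epsilon)$ --- gives $\eta^{\ast\,2/(n+1)}\to\kappa(\epsilon)\in(0,1)$, i.e.\ (R-ii), and this also excludes (R-iii) automatically.

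The main obstacle is the (R-ii) computation: establishing that the rescaled integral in~\eqref{eq:1-pd} converges to a genuine constant $g(\kappa)$ and that $g$ is strictly monotone and onto $(0,1)$, since this is precisely what legitimizes discarding both (R-i) and (R-iii). The only other care point is making the monotonicity/uniqueness squeeze rigorous, which I would handle by a $\liminf/\limsup$ argument on $\eta^{\ast\,2/(n+1)}$ so as not to presuppose convergence of the threshold's power.
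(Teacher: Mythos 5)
Your proof is correct, and for Case~2 it takes a genuinely different route from the paper's. The paper argues entirely by elimination: it rules out Regime (R-iii) by showing via the lower bound~\eqref{eq:uppbdapp} (with the incomplete-Gamma asymptotics under~\eqref{tailr1R22}--\eqref{tailr2R22}) that $(1-P_D)\to 1$ there, rules out (R-i) via Lemma~\ref{lem:asymptotics}, and concludes (R-ii) by exhaustion; it never actually evaluates $(1-P_D)$ under (R-ii). You instead construct the solution: you compute the limit $g(\kappa)$ of $(1-P_D)$ from~\eqref{eq:1-pd} under (R-ii), show $g$ is a decreasing bijection of $(0,1)$ onto $(0,1)$, and squeeze $\eta^{*\,2/(n+1)}$ around the root $\kappa(\epsilon)$ --- which disposes of (R-iii) as a free byproduct and, unlike the paper, also certifies that the target level $\epsilon$ is actually attainable in Regime (R-ii). (Your limit function checks out: the prefactor $\frac{1}{\sqrt{\pi}}\frac{\Gamma(\frac{n+1}{2})}{\Gamma(\frac{n}{2})}\sqrt{\tfrac{2}{\xi^2(n+1)}}\to\frac{1}{\xi\sqrt{\pi}}$ and $\int_0^\infty e^{-v/\xi^2}v^{-1/2}\,dv=\xi\sqrt{\pi}$ give the endpoint values $0$ and $1$.) For Case~1 your argument is the same monotonicity idea as the paper's, but made constructive and rigorous: the paper settles for ``by inspection, necessarily $\eta^{2/(n+1)}\to 1$,'' whereas you exhibit the candidate threshold $c(\epsilon)\,\eta_{\max}$ and pin $\eta^*$ between $c'\eta_{\max}$ and $c''\eta_{\max}$. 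Your closing remark about running the squeeze through $\liminf/\limsup$ so as not to presuppose that $\eta^{*\,2/(n+1)}$ converges addresses a gap that the paper's trichotomy of regimes quietly leaves open; the extra bookkeeping is the price, and the payoff is a self-contained proof that each Neyman--Pearson constraint has a solution in the claimed regime rather than merely not having one elsewhere.
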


\begin{proof}
Since $P_F$ increases with $\eta$, in order for $P_F$ to be equal to $\eps \in (0,1)$ as $n$ grows to infinity, then $\eta$ must increase "as fast as $\eta_{\max}$" with $n$ for otherwise $P_F$ would go to zero as implied by our analysis in Lemma~\ref{lem:asymptotics} and our observations in Section~\ref{sec:pre}. Moreover, since $\eta_{\max}^{\frac{2}{n+1}} \underset{n \rightarrow \infty}{\longrightarrow} 1$~\eqref{eq:fmaxtail}, and $\eta \leq \eta_{\max}$ then by inspection, one can see that necessarily $\eta^{\frac{2}{n+1}} \underset{n \rightarrow \infty}{\longrightarrow} 1$ which corresponds to Regime (R-i). 

Similarly, $(1 - P_D)$ is decreasing with $\eta$ and for a fixed $\eta$ $(1 - P_D) \underset{n \rightarrow \infty}{\longrightarrow} 0$. In order for $(1 - P_D)$ to be equal to $\eps$, $\eta$ must decrease with $n$. 
%Both operating Regimes 1 or 2 are possible. We rule out Regime 1 by 
We prove next that Regime (R-iii) (and a fortiori $\eta < \eta_0$) is not possible for otherwise $(1 - P_D) \underset{n \rightarrow \infty}{\longrightarrow} 1$. In fact, equation~\eqref{eq:uppbdapp} in Appendix~\ref{app:proofeq} gives a lowerbound on $1 - P_D$:
\begin{align}
&1 - P_D \nonumber\\
%&= \frac{2}{\sqrt{\pi}} \frac{\Gamma\!\left(\frac{n+1}{2}\right)}{  \Gamma\!\left(\frac{n}{2}\right)} \int_{r_1}^{r_2} \left(1 + r^2\right)^{-\frac{n + 1}{2}} r^{n-1}\,dr \nonumber\\
%&\geq  \frac{1}{\sqrt{\pi}} \frac{\Gamma\!\left(\frac{n+1}{2}\right)}{  \Gamma\!\left(\frac{n}{2}\right)}\int_{\frac{1}{r_2^2}}^{\frac{1}{r_1^2}} e^{-\frac{n + 1}{2} u} u^{-\frac{1}{2}}\,du \label{eq:sameseq}\\
&\geq \frac{1}{\sqrt{\pi}} \frac{\Gamma\!\left(\frac{n+1}{2}\right)}{  \Gamma\!\left(\frac{n}{2}\right)} \sqrt{\frac{2}{n+1}} \, \left[\Gamma\!\left(\frac{1}{2}, \frac{n+1}{2r_2^2}\right) - \Gamma\!\left(\frac{1}{2}, \frac{n+1}{2r_1^2}\right)\right] \nonumber\\
&\underset{n \rightarrow \infty}{\longrightarrow} \frac{1}{\sqrt{\pi}} \,\Gamma\!\left(\frac{1}{2},0\right) - 0= 1, \nonumber
\end{align}
%$\eta = \eta_0$ --given by~\eqref{eq:eta0}, then $1 - P_D \underset{n \rightarrow \infty}{\longrightarrow} 1$. Indeed, whenever $\eta = \eta_0$, then $r_0^2 \sim \frac{\xi^2}{2} n\ln n$ (see equation~\eqref{tailr2eta0}) and equation~\eqref{eq:pddfR2} implies:
%\begin{eqnarray}
%1 - P_D \hspace{-15pt}&=&\hspace{-7pt} \frac{2}{\sqrt{\pi}} \frac{\Gamma\!\left(\frac{n+1}{2}\right)}{  \Gamma\!\left(\frac{n}{2}\right)} \int_{0}^{r_0} \left(1 + r^2\right)^{-\frac{n + 1}{2}} r^{n-1}\,dr \nonumber\\
%&\geq&\hspace{-7pt} \frac{1}{\sqrt{\pi}} \frac{\Gamma\!\left(\frac{n+1}{2}\right)}{  \Gamma\!\left(\frac{n}{2}\right)}\int_{\frac{1}{r_0^2}}^{\infty} e^{-\frac{n + 1}{2} u} u^{-\frac{1}{2}}\,du \label{eq:sameseq}\\
%&=&\hspace{-7pt} \frac{1}{\sqrt{\pi}} \frac{\Gamma\!\left(\frac{n+1}{2}\right)}{  \Gamma\!\left(\frac{n}{2}\right)} \sqrt{\frac{2}{n+1}} \, \Gamma\!\left(\frac{1}{2}, \frac{n+1}{2r_0^2}\right) \nonumber\\
%&\underset{n \rightarrow \infty}{\longrightarrow}&\hspace{-7pt} \frac{1}{\sqrt{\pi}} \,\Gamma\!\left(\frac{1}{2},0\right) = 1, \nonumber
%\end{eqnarray}
%where in order to write equation~\eqref{eq:sameseq} we used the change of variable $u = \frac{1}{r^2}$ and the well-known identity $1 + u \leq e^u$, for $u \geq 0$. 
where the limit result is due to Stirling's identity and to the asymptotic behavior of $r_1^2$ and $r_2^2$ under Regime (R-iii) according to equations~(\ref{tailr1R22}) and~(\ref{tailr2R22}). Furthermore, we show that $\eta$ must be under Regime (R-ii) by ruling out Regime (R-i) as well. Indeed, under case (R-i), the condition of Lemma~\ref{lem:asymptotics} is satisfied, and hence 
\begin{equation*}
1 - P_D \sim  \frac{2}{\sqrt{\pi}} \frac{\sqrt{\ln\left(\frac{\eta_{\max}}{\eta}\right)}}{\eta_{\max}} \underset{n \rightarrow \infty}{\longrightarrow} 0,
\end{equation*}
where we used the fact that $\eta_{\max} = \Theta(\sqrt{n})$~\eqref{eq:fmaxtail}. This rules out the Regime (R-i) thus proving our assertion. 
\end{proof}

We recall Theorem~\ref{th:error}:
\NP*

\begin{proof}
Let $0 < \epsilon < 1$. We begin with the proof of Case 1. Whenever $P_F = \eps$, the results of Lemma~\ref{lem:regime} imply that Regime (R-i) is the operating regime. Therefore, we use the result of Lemma~\ref{lem:asymptotics} to write the expression of $P_F$: 
%which says that for any $\eta >0$, we have $P_F \sim \sqrt{\frac{2}{\pi}} \, \frac{C}{\sqrt{n \ln \big( C^{-2} n \big)}}$ as $n \rightarrow \infty$ where $C \eqdef \eta \, \frac{2}{\xi} \, e^{-\frac{1}{\xi^2}}$ to write: 
 \begin{align}
& \lim_{n \rightarrow \infty}2 \, Q\left(\sqrt{2 \ln \left( \frac{\eta_{\max}}{\eta}\right)}\right) = \eps \nonumber\\
\implies& \lim_{n \rightarrow \infty} \ln \left(\frac{\eta_{\max}}{\eta}\right) = \frac{1}{2}\left(Q^{-1}\left(\frac{\epsilon}{2}\right)\right)^2 \label{eq:etavalpf},
\end{align}
where $Q^{-1}(\cdot)$ is the inverse $Q$-function. 
Using the expression of $(1 - P_D)$ as given in Lemma~\ref{lem:asymptotics}, we get: 
%under Regime 2, equations~\eqref{eq:pddfR2} through~\eqref{eq:Gammaderiv} remain true, and we obtain: 
\begin{align}
1- P_D & \sim \frac{2}{\sqrt{\pi}} \frac{\sqrt{\ln\left(\frac{\eta_{\max}}{\eta}\right)}}{\eta_{\max}} \nonumber\\
&\sim \sqrt{\frac{2}{\pi}} \frac{2}{\xi}e^{-\frac{1}{\xi^2}}Q^{-1}\left(\frac{\epsilon}{2}\right) \frac{1}{\sqrt{n}}, \label{eq:etabeha}
\end{align}
% \begin{align}
% 1 - P_D 
% & \sim \sqrt{\frac{2}{\pi}} \,\frac{\Gamma\!\left(\frac{n+1}{2}\right)}{ \Gamma\!\left(\frac{n}{2}\right)} \,\frac{1}{\xi^2 \sqrt{n+1}} \,\frac{2 \delta_n}{1 - \delta_n^2} \, \xi \, e^{-\frac{1}{\xi^2}}\nonumber\\
% &\sim \sqrt{\frac{2}{\pi}} \, \frac{2}{\xi} \,e^{-\frac{1}{\xi^2}}\sqrt{-\frac{2}{n+1} \ln \left(\frac{\eta}{\eta_{\max}}\right)} \label{eq:etabeha}\\
% &\sim \sqrt{\frac{2}{\pi}} \, \frac{2}{\xi} \,e^{-\frac{1}{\xi^2}}\sqrt{\frac{\mathcal{W}_{0}\left(\frac{2}{\pi \eps^2} \right)}{n+1}}\nonumber,
% \end{align}
where the last equality follows from equation~\eqref{eq:etavalpf} and equation~\eqref{eq:fmaxtail}. This completes the proof of Case 1. 

We move now to proving Case 2 of the theorem. As shown in Lemma~\ref{lem:regime}, Regime (R-ii) is the OR and therefore the expression given by Lemma~\ref{lem:asymptotics} does not hold. %We use the fact that, under case (R-ii), $\left(\frac{\eta}{\eta_{\max}}\right)^{\frac{2}{n+1}} \underset{n \rightarrow \infty}{\longrightarrow} \kappa(\epsilon) = \kappa$, $0 < \kappa <1$, and we define
We proceed by writing $P_F$ as given by equation~\eqref{eq:changevarLap}:
\begin{align}
&P_F 
%= \frac{1}{\Gamma\!\left(\frac{n}{2}\right)} \!\left[ \int^{\frac{r_1^2}{\xi^2}}_{0}  \hspace{-3pt} r^{\frac{n}{2} - 1} e^{-r}\,dr + \int_{\frac{r_2^2}{\xi^2}}^{\infty} \hspace{-3pt} r^{\frac{n}{2} - 1} e^{-r}\,dr \right] \nonumber\\
%%\frac{1}{\Gamma\left(\frac{n}{2}\right)} \Gamma\left(\frac{n}{2}, \frac{r_1^2}{\xi^2}\right) \label{eq:part1pf}\\
%%&\leq& 1 - \frac{1}{\Gamma\left(\frac{n}{2}\right)} \int_{\frac{n + 1}{2} (1 - \sqrt{-\kappa + 1} - \frac{1}{\xi^2}}^{\frac{n + 1}{2} (1 + \sqrt{-2 \ln(\kappa)} - \frac{2}{3}\ln \kappa) - \frac{1}{\xi^2}}  r^{\frac{n}{2} - 1} e^{-r}\,dr \nonumber \\
= \frac{\left(\frac{n}{2} - 1\right)^{\frac{n}{2}}}{\Gamma\!\left(\frac{n}{2}\right)} \left[ \int^{\frac{2 r_1^2}{(n-2)\xi^2}}_{0} \hspace{-3pt} e^{\left(\frac{n}{2} - 1\right)(\ln x  - x)}\,dx \right. \nonumber\\
& \qquad \qquad \qquad \qquad \qquad \left. + \int_{\frac{2 r_2^2}{(n-2)\xi^2}}^{\infty}  e^{\left(\frac{n}{2} - 1\right)(\ln x  - x)} \,dx \right] \nonumber \\
%&=& \frac{1}{\Gamma\left(\frac{n}{2}\right)} (n \ln n)^{\frac{n}{2}} \int_{k}^{+\infty} r^{\frac{n}{2} - 1} e^{-n \ln n \, r } \,dr
%&=& 1 - \frac{\left(\frac{n}{2} - 1\right)^{\frac{n}{2}}}{\Gamma\left(\frac{n}{2}\right)} \int^{-\left(\frac{n+1}{n-2}\right)\mathcal{W}_{-1} \left[-e^{-1} \left(\frac{\eta}{\eta_{\max}}\right)^{\frac{2}{n+1}}\right] - \frac{2}{\xi^2(n-2)}}_{-\left(\frac{n+1}{n-2}\right)\mathcal{W}_{0} \left[-e^{-1} \left(\frac{\eta}{\eta_{\max}}\right)^{\frac{2}{n+1}}\right] - \frac{2}{\xi^2(n-2)}}   e^{\left(\frac{n}{2} - 1\right)(\ln x  - x)} \,dx \nonumber\\
&\leq \frac{\left(\frac{n}{2} - 1\right)^{\frac{n}{2}} e^{-\left(\frac{n}{2} - 1\right)}}{\Gamma\!\left(\frac{n}{2}\right)} \left[ \int^{\frac{2 r_1^2}{(n-2)\xi^2}}_{0}  e^{-\left(\frac{n}{2} - 1\right)\frac{(x-1)^2}{2}}\,dx  \right. \nonumber \\
& \qquad \qquad \qquad \qquad \,\,  \left. + \, \int_{\frac{2 r_2^2}{(n-2)\xi^2}}^{\infty}  e^{-\left(\frac{n}{2} - 1\right) K(\epsilon) (x-1)}\,dx \right] \label{eq:upperexpo}\\
&= \frac{\left(\frac{n}{2} - 1\right)^{\frac{n}{2}} e^{-\left(\frac{n}{2} - 1\right)}}{\Gamma\!\left(\frac{n}{2}\right)} \nonumber\\
&\qquad \qquad \left[ \sqrt{\frac{2\pi}{\frac{n}{2}-1}} \left(  Q\!\left(\frac{-1}{\sqrt{\frac{2}{n-2}}}\right) - Q\!\left(\frac{\frac{2 r_1^2}{(n-2)\xi^2} - 1}{\sqrt{\frac{2}{n-2}}}\right) \right) \right. \nonumber\\
& \qquad \qquad \qquad \quad \left. + \frac{1}{K(\epsilon)\left(\frac{n}{2}-1\right)} e^{-\left(\frac{n}{2} - 1\right) K(\epsilon) (\frac{2 r_2^2}{(n-2)\xi^2}-1)} \right] \nonumber\\
&\sim \frac{1}{\sqrt{\pi}} \, \frac{1}{\sqrt{n-2}} \left[ \, \frac{e^{-\frac{(n-2)\left(1 +  \mathcal{W}_{0} \left(-e^{-1} \kappa(\epsilon)\right)\right)^2}{4}}}{1 +  \mathcal{W}_{0} \left(-e^{-1} \kappa (\epsilon)\right)} - e^{- \frac{n-2}{4}} \right.\nonumber\\
&\qquad \left. + \frac{1}{K(\epsilon) \sqrt{\pi} \sqrt{n-2}} e^{-\left(\frac{n}{2} - 1\right) K(\epsilon) \left[ -1 - \mathcal{W}_{-1} \left(-e^{-1} \kappa(\epsilon)\right)\right]} \right]\label{eq:Q}\\
&\sim \frac{\kappa_3(\epsilon)e^{- \kappa_4(\epsilon) n }}{\sqrt{n}}, \nonumber %= \frac{e^{-\Theta(n)}}{\Theta(\sqrt{n})} \nonumber
%&=  \frac{e^{-\Theta(n)}}{\sqrt{n}} , \nonumber %\label{eq:uppbdpf}
\end{align}
where $\kappa_3(\epsilon) > 0$, $\kappa_4(\epsilon) > 0$ can be judiciously chosen function of the different parameters in equation~(\ref{eq:Q}). Equation~\eqref{eq:Q} is due to Stirling's approximation, to the approximation in~\eqref{eq:Qapproxneg}, and to the fact that 
\begin{align*}
\lim_{n \rightarrow \infty} \frac{2 r_1^2}{(n-2)\xi^2}  \, =& - \mathcal{W}_{0} \left(-e^{-1} \kappa(\epsilon) \right) < 1\\ 
\& \quad \lim_{n \rightarrow \infty} \frac{2 r_2^2}{(n-2)\xi^2}  \, =& - \mathcal{W}_{-1} \left(-e^{-1} \kappa(\epsilon) \right) > 1,
\end{align*}
given the asymptotic behavior of $r_1^2$ and $r_2^2$ under case (R-ii) as given by equations~(\ref{eq:caseii0}) and (\ref{eq:caseii-1}), and where $\kappa(\epsilon) \eqdef \lim_{n \rightarrow \infty} \eta^{\frac{2}{n+1}}$, $0 < \kappa(\epsilon) <1$. Equation~\eqref{eq:upperexpo} is due to the following identities:  
\begin{equation*}
\ln x - x \leq -1 - \frac{(x-1)^2}{2} \qquad 0 \leq x \leq 1,
\end{equation*}
and for every $c > 1$ there exits a $K \in(0, 1)$ such that: 
\begin{equation*}
\ln x - x \leq -1 - K (x - 1) \qquad  c \leq x .
\end{equation*}
%r_2^2 &\sim& \frac{\xi^2 }{2}\, (n + 1) \, \ln \eta^{-\frac{2}{n+1}}. \label{tailr2R22}
Finally, we provide a lowerbound on $P_F$:
\begin{align}
P_F&= \frac{1}{\Gamma\!\left(\frac{n}{2}\right)} \!\left[ \int^{\frac{r_1^2}{\xi^2}}_{0}  \hspace{-3pt} r^{\frac{n}{2} - 1} e^{-r}\,dr + \int_{\frac{r_2^2}{\xi^2}}^{\infty} \hspace{-3pt} r^{\frac{n}{2} - 1} e^{-r}\,dr \right] \nonumber\\
&\geq \frac{1}{\Gamma\!\left(\frac{n}{2}\right)} \int_{\frac{r_2^2}{\xi^2}}^{\infty}  r^{\frac{n}{2} - 1} e^{-r}\,dr  \nonumber\\
&= \frac{\left(\frac{n}{2} - 1\right)^{\frac{n}{2}}}{\Gamma\!\left(\frac{n}{2}\right)} \int_{\frac{2 r_2^2}{(n-2)\xi^2}}^{\infty}  e^{\left(\frac{n}{2} - 1\right)(\ln x  - x)}\,dx \nonumber\\
&\geq \frac{\left(\frac{n}{2} - 1\right)^{\frac{n}{2}}  e^{-\left(\frac{n}{2} - 1\right)}}{\Gamma\!\left(\frac{n}{2}\right)} \hspace{-0.05cm}  \int_{\frac{2 r_2^2}{(n-2)\xi^2}}^{\infty}  \hspace{-0.05cm} e^{-\left(\frac{n}{2} - 1\right)\frac{(x  - 1)^2}{2}} dx \label{eq:lowerbdhigh}\\
&= \frac{\left(\frac{n}{2} - 1\right)^{\frac{n}{2}}  e^{-\left(\frac{n}{2} - 1\right)}}{\Gamma\!\left(\frac{n}{2}\right)}   \sqrt{\frac{2\pi}{\frac{n}{2}-1}} Q\!\left(\frac{\frac{2 r_2^2}{(n-2)\xi^2} - 1}{\sqrt{\frac{2}{n-2}}}\right)\nonumber\\
&\sim \frac{1}{\sqrt{\pi}} \, \frac{1}{\sqrt{n-2}} \, \frac{e^{-\frac{(n-2)\left[-1 - \mathcal{W}_{-1} \left(-e^{-1} \kappa(\epsilon)\right)\right]^2}{4}}}{-1 -  \mathcal{W}_{-1} \left(-e^{-1} \kappa (\epsilon)\right)} \label{eq:caseiiibd}\\ %\frac{e^{-\frac{(n-2)\left(\ln \eta^{-\frac{2}{n+1}}\right)^2}{4}}}{\ln \eta^{-\frac{2}{n+1}}} \label{eq:caseiiibd}\\
&= \frac{\kappa_1(\epsilon)e^{- \kappa_2(\epsilon) n }}{\sqrt{n}}, \nonumber %\frac{e^{-\omega(n)}}{\omega(\sqrt{n})}, \nonumber %\label{eq:lwbdpf},
\end{align}
for some $\kappa_1(\epsilon), \kappa_2(\epsilon) > 0$. Equation~\eqref{eq:lowerbdhigh} is due to the fact that $\ln x - x \geq -1 - \frac{(x - 1)^2}{2} $ for $x \geq 1$ and equation~\eqref{eq:caseiiibd} is justified in a similar manner as done for equation~(\ref{eq:Q}). 
\end{proof}

%%%%%%%%%%%%%%%%%%%%%%%%%%%%%%%%%%%%%%%%%%%%%%%%%%%%%%%%%%%%%%%%%%%%%%%%%%%%%%%%
\section{Conclusion}
\label{sec:con}
It is well-known that, in a discrete setup and under IID observations, the error of the second kind $(1 - P_D)$ goes to zero when $n \rightarrow \infty$ according to $2^{-nD(p\|q)}$, where 
%$D(p\|q)$ is the relative entropy between  
$p$ and $q$ are the probability laws under the two different hypotheses. In some cases, a similar observation can be inferred whenever $p$ and $q$ are continuous~\cite{tusnady1977asymptotically}. Though the results presented in this paper are for a specific problem on hypothesis testing between two isotropic probability laws, they are rather interesting on their own and are ``non-standard" when compared to the IID ``Gaussian"-centric case. Our findings suggest new insights on the hypothesis testing problem: 
\begin{enumerate}
%\item Under continuous setups, the same expression of the probability of error of the second type should hold under certain conditions (literature ...). 
\item As a generalization to the IID case, the leading term in the exponent might be possibly governed by the behavior of $D(p_n\|q_n)$ as a function of $n$ whenever the relative entropy is finite. This extension is currently being investigated by the authors. Note that, under the IID case, we have $D(p_n\|q_n) = n D(p\|q)$ which recovers the standard result of the IID setup.  
%For example, under IID observations, if $p \sim \mathcal{N}(0, \frac{1}{2}$ and $q \sim \mathcal{N}(0, \sigma^2)$, one can show that $D(p_n\|q_n) = \frac{n}{2}\left(\frac{1}{2 \sigma^2} - 1 + \ln (2 \sigma^2)\right)$. 
\item In general $D(p_n\|q_n)$ is not linear in $n$. When $p_n$ is \text{circular (IID)} $\mathcal{N}(0, \sigma^2)$ and $q_n$ is $\text{circular} \,\, \mathcal{C}(0,\gamma)$, we show in Appendix~\ref{app:KL} that $D(p_n\|q_n) = \Theta\left(\ln \sqrt{n}\right)$ which corroborates the previous point as per the results of Theorem~\ref{th:error} for $(1 - P_D)$. %In fact, we show in Appendix~\ref{app:KL} the stronger result that $D(p_n\|q_n) = \Theta\left(\ln \sqrt{n}\right)$ whenever $q_n$ is any circular alpha-stable distribution ($0 < \alpha < 2$) and not necessarily Cauchy.
\item Whenever $D(p_n\|q_n)$ is not finite, the term in the exponent is not linear: It was found in this setup to be at least $\Theta(n) + \frac{1}{2} \ln n$. 
\end{enumerate}

%%%%%%%%%%%%%%%%%%%%%%%%%%%%%%%%%%%%%%%%%%%%%%%%%%%%%%%%%%%%
%%%%%%%%%%%%%%%%%%%%%%%%%%%%%%%%%%%%%%%%%%%%%%%%%%%%%%%%%%%%
\bibliographystyle{IEEEtran}
\bibliography{paper}
%%%%%%%%%%%%%%%%%%%%%%%%%%%%%%%%%%%%%%%%%%%%%%%%%%%%%%%%%%%%%%%%%%%%%%%%%%%%%%%%%%%%%%%%%%%%%%%%%%%%%%%%%%%%%%%%%%%%%%%%%%
%\clearpage
%%%%%%%%%%%%%%%%%%%%%%%%%%%%%%%%%%%%%%%%%%%%%%%%%%%%%%%%%%%%%%%%%%%%%%%%%%%%%%%%
\appendices
\section{The KL Divergence}
\label{app:KL}
%Though the closed-form expression of the PDF of an isotropic $\alpha$-stable distribution ($0 < \alpha < 2$) is not known in general, 
We characterize in this appendix the behavior at large values of $n$ of the relative entropy between an IID $n$-dimensional zero-mean and variance $\frac{\xi^2}{2}$ Gaussian distribution $p_G \eqdef \mathcal{N}\left(0,\frac{\xi^2}{2}\right)$ and an isotropic $n$-dimensional Cauchy
%$\alpha$-stable\footnote{The characteristic function of an isotropic $n$-dimensional $\alpha$-stable distribution, $0 < \alpha < 2$ and $\gamma > 0$, is given by $\phi(\bm {\omega}) = e^{-\gamma^{\alpha}\|\bm{\omega}\|^{\alpha}}$. Such a distribution is also known as a sub-Gaussian $\alpha$-stable distribution with an underlying zero-mean Gaussian vector with IID components~\cite[proposition 2.5.5]{sam1994}.} 
distribution $p_C$ with a scale parameter $\gamma = 1$. Our main result shows that $D(p_G\|p_C) \sim \frac{1}{2} \ln n$. However, before proceeding with the proof, we highlight the important fact that $D(p_C\|p_G)$ is infinite. Since both $p_G({\bf x})$ and $p_C({\bf x})$ only depend on $r = \|{\bf x}\|$, we abuse the notation in what follows and write $p_G(r)$ and $p_C(r)$ instead to highlight this dependency without renaming the functions. 
\begin{align}
&D(p_C\|p_G)
= - h(p_C) - \int_{\mathbb{R}^n} p_C({\bf x}) \ln p_G({\bf x}) \, d{\bf x}\nonumber\\
&= - h(p_C) + \ln\left(\pi^{\frac{n}{2}} \xi^n\right) + \frac{1}{\xi^2} \int_{\mathbb{R}^n} r^2 p_C(r)\,d{\bf x} \nonumber\\
&= - h(p_C) + \ln\left(\pi^{\frac{n}{2}} \xi^n\right) + \frac{1}{\xi^2} \frac{2 \pi^{\frac{n}{2}}}{\Gamma\!\left(\frac{n}{2}\right)}  \int_{0}^{\infty} r^{n+1} p_C(r)\,dr \nonumber\\
&= \infty \nonumber,
\end{align}
where in order to write the last equality we used the fact that $h(p_C) < \infty$ and that $ p_C(r) = \Theta\left(r^{-n - 1}\right)$ as $r \rightarrow +\infty$ as given by equation~\eqref{eq:cauchypdfn}.
% \begin{align}
% &D(p_C\|p_G)
% = - h(p_C) - \int_{\mathbb{R}^n} p_C({\bf x}) \ln p_G({\bf x}) \, d{\bf x}\nonumber\\
% &= - h(p_C) + \ln\left(\pi^{\frac{n}{2}} \xi^n\right) + \frac{1}{\xi^2} \int_{\mathbb{R}^n} r^2 p_C(r)\,d{\bf x} \nonumber\\
% &= - h(p_C) + \ln\left(\pi^{\frac{n}{2}} \xi^n\right) + \frac{1}{\xi^2} \frac{2 \pi^{\frac{n}{2}}}{\Gamma\!\left(\frac{n}{2}\right)}  \int_{0}^{\infty} r^{n+1} p_C(r)\,dr \nonumber\\
% &= \infty \nonumber,
% \end{align}
%where in order to write the last equality we used the fact that $h(p_C) < \infty$ and that $r^{n  - 1} p_C(r) = \Theta\left(r^{-1 - \alpha}\right)$ as $r \rightarrow +\infty$~\cite{nolancirc2013}. 

Back to computing $D(p_G\|p_C)$, we assume $n \geq 2$ and we write: 
\begin{align}
&D(p_G\|p_C) = - h(p_G) - \int_{\mathbb{R}^n} p_G({\bf x}) \ln p_C({\bf x}) \, d{\bf x} \nonumber\\
&= -\frac{n}{2} \ln (\pi e \xi^2) - \ln \frac{\Gamma\left(\frac{n+1}{2}\right)}{\pi^{\frac{n+1}{2}}} \nonumber\\
&\qquad \qquad + \frac{n + 1}{2 \pi^{\frac{n}{2}} \xi^n} \int_{\mathbb{R}^n} e^{-\frac{r^2}{\xi^2}} \ln\left(1 + r^2\right)\,d{\bf x} \nonumber\\
&= -\frac{n}{2} \ln (\pi e \xi^2) - \ln \frac{\Gamma\left(\frac{n+1}{2}\right)}{\pi^{\frac{n+1}{2}}} \nonumber\\
& \qquad \qquad + \frac{n + 1}{\xi^n \Gamma\left(\frac{n}{2}\right)} \int_{0}^{\infty} e^{-\frac{r^2}{\xi^2}} \ln\left(1 + r^2\right) r^{n - 1}\,dr \nonumber\\
%&= -\frac{n}{2} \ln (\pi e \xi^2) - \ln \frac{\Gamma\left(\frac{n+1}{2}\right)}{\pi^{\frac{n+1}{2}}} \nonumber\\
%& \qquad \qquad + \frac{n + 1}{2 \Gamma\left(\frac{n}{2}\right)} \int_{0}^{\infty} e^{-r} \ln\left(1 + \xi^2 r\right) r^{\frac{n}{2} - 1}\,dr \nonumber\\
%&= -\frac{n}{2} \ln (\pi e \xi^2) - \ln \frac{\Gamma\left(\frac{n+1}{2}\right)}{\pi^{\frac{n+1}{2}}} \nonumber\\
%& \quad + \frac{\left(\frac{n}{2}-1\right)^{\frac{n}{2}}}{\Gamma\left(\frac{n}{2}\right)} \int_{0}^{\infty} e^{\left(\frac{n}{2}-1\right)(\ln r - r)} \ln \left(1 + \left(\frac{n}{2} - 1\right)\xi^2 r\right)\,dr \nonumber\\
& =  -\frac{n}{2} \ln (\pi e \xi^2) - \ln \frac{\Gamma\left(\frac{n+1}{2}\right)}{\pi^{\frac{n+1}{2}}} +  \frac{n + 1}{2}\, I_n, \label{eq:dpnqn}
\end{align}
where 
\begin{align}
& I_n = \frac{1}{\Gamma\left(\frac{n}{2}\right)} \int_{0}^{\infty} e^{-r} \ln\left(1 + \xi^2 r\right) r^{\frac{n}{2} - 1}\,dr \nonumber\\
&= \frac{\left(\frac{n}{2}-1\right)^{\frac{n}{2}}}{\Gamma\left(\frac{n}{2}\right)}  \int_{0}^{\infty} e^{\left(\frac{n}{2}-1\right)(\ln r - r)} \ln \left[1 + \left(\frac{n}{2} - 1\right)\xi^2 r\right] dr \label{eq:integ1}\\
&= I_{1n} + I_{2n} + I_{3n}, \nonumber
\end{align}
where $I_{1n}$, $I_{2n}$, $I_{3n}$ are a partition of $I_n$ on the respective disjoint sets: $B_1 = (0,1-\epsilon_n) \cup (1 + \epsilon_n, R]$, $B_2 = [1-\epsilon_n, 1+\epsilon_n]$, and $B_3 = (R, \infty)$ for some $R > 0$ and $\epsilon_n = \left(\frac{\ln n}{\sqrt{n}}\right)$, Note that $\epsilon_n \rightarrow 0$ as $n \rightarrow \infty$ and $R$ is a constant to be chosen large enough as determined hereafter.

We first consider $I_{1n}$ and show that it is $o\left(\frac{\ln n}{n}\right)$. For that matter, it is enough to show that the integrand is $o\left(\frac{\ln n}{n}\right)$. Indeed, for $r \in B_1$, we have: 
\begin{align}
&\frac{\left(\frac{n}{2}-1\right)^{\frac{n}{2}}}{\Gamma\left(\frac{n}{2}\right)} e^{\left(\frac{n}{2}-1\right)(\ln r - r)} \ln \left[1 + \left(\frac{n}{2} - 1\right)\xi^2 r\right] \nonumber\\
%&\leq \frac{\left(\frac{n}{2}-1\right)^{\frac{n}{2}}}{\Gamma\left(\frac{n}{2}\right)} e^{\left(\frac{n}{2}-1\right)(\ln r - r)} \ln \left(1 + \left(\frac{n}{2} - 1\right)\xi^2 R\right)\\
\leq & \, \frac{\left(\frac{n}{2}-1\right)^{\frac{n}{2}}}{\Gamma\left(\frac{n}{2}\right)} e^{\left(\frac{n}{2}-1\right)\left[ -1  -\frac{\epsilon_n^2}{3}  \right]} \ln \left[ 1 + \left(\frac{n}{2} - 1\right)\xi^2 r \right] \label{eq:stir100b}\\
\sim & \, \frac{\left(\frac{n}{2}-1\right)^{\frac{1}{2}}}{\sqrt{2 \pi}} e^{\left(\frac{n}{2}-1\right)\left[ -\frac{\epsilon_n^2}{3} \right]} \ln \left[ \left(\frac{n}{2} - 1\right)\xi^2 r \right] \label{eq:stir100}\\
= & \, o\left(\frac{\ln n}{n}\right), \label{eq:i1n}
\end{align}
where we used the fact that $\ln r - r + 1 < \ln (1 + \epsilon_n) - \epsilon_n < -\frac{\epsilon_n^2}{3}$ on $B_1$ to write~(\ref{eq:stir100b}), and  Stirling's approximation of the Gamma function to write~(\ref{eq:stir100}).

Regarding $I_{3n} > 0$, we have: 
\begin{align}
&I_{3n} \nonumber\\
%&= \frac{1}{\Gamma\left(\frac{n}{2}\right)} \int_{0}^{\infty} e^{-r} \ln\left(1 + \xi^2 r\right) r^{\frac{n}{2} - 1}\,dr \nonumber\\
&= \frac{\left(\frac{n}{2}-1\right)^{\frac{n}{2}}}{\Gamma\left(\frac{n}{2}\right)}  \int_{R}^{\infty} e^{\left(\frac{n}{2}-1\right)(\ln r - r)} \ln \left[ 1 + \left(\frac{n}{2} - 1\right)\xi^2 r \right] \,dr \nonumber\\
&\leq \frac{\left(\frac{n}{2}-1\right)^{\frac{n}{2}}}{\Gamma\left(\frac{n}{2}\right)}  \int_{R}^{\infty} e^{-\left(\frac{n}{2}-1\right)\frac{r}{2}} \left[\ln \left[\left(\frac{n}{2} - 1\right)\xi^2\right] + r\right] \, dr \label{eq:largeR1} \\
&= \frac{\left(\frac{n}{2}-1\right)^{\frac{n}{2}}}{\Gamma\left(\frac{n}{2}\right)} \left[\ln \left[\left(\frac{n}{2} - 1\right)\xi^2\right] \frac{2}{\frac{n}{2} - 1} e^{-(\frac{n}{2}-1)\frac{R}{2}}  \right. \nonumber\\
&\left. \qquad \qquad + \frac{2}{\frac{n}{2} - 1} R \, e^{-(\frac{n}{2}-1)\frac{R}{2}} + \left(\frac{2}{\frac{n}{2} - 1}\right)^2 e^{-(\frac{n}{2}-1)\frac{R}{2}}\right] \label{eq:intparts}\\
&\sim  \frac{\left(\frac{n}{2}-1\right)^{\frac{1}{2}}}{\sqrt{2 \pi}} e^{(\frac{n}{2} - 1)}  e^{-(\frac{n}{2}-1)\frac{R}{2}}\left[\ln \left[\left(\frac{n}{2} - 1\right)\xi^2\right] \frac{2}{\frac{n}{2} - 1}  \right. \nonumber\\
&\left. \qquad \qquad + \frac{2}{\frac{n}{2} - 1} R \,  + \left(\frac{2}{\frac{n}{2} - 1}\right)^2 \right] \nonumber\\
&= o\left(\frac{\ln n}{n}\right), \label{eq:i3n}
\end{align}
which is true for $R > 2$. Equation~(\ref{eq:largeR1}) is justified by the fact that $1 + \left(\frac{n}{2} - 1\right)\xi^2 r < \left(\frac{n}{2} - 1\right)\xi^2 e^{r}$, for $n > 2$ and $r \in B_3$ with $R$ large enough. Equation~(\ref{eq:intparts}) is due to integration by parts. 

Finally, considering $I_{2n}$, we get: 
\begin{align}
&I_{2n} \nonumber\\
%&= \frac{1}{\Gamma\left(\frac{n}{2}\right)} \int_{0}^{\infty} e^{-r} \ln\left(1 + \xi^2 r\right) r^{\frac{n}{2} - 1}\,dr \nonumber\\
&= \frac{\left(\frac{n}{2}-1\right)^{\frac{n}{2}}}{\Gamma\left(\frac{n}{2}\right)}  \int_{1-\epsilon_n}^{1+\epsilon_n} \hspace{-0.15cm}  e^{\left(\frac{n}{2}-1\right)(\ln r - r)} \ln \left[1 + \left(\frac{n}{2} - 1\right)\xi^2 r\right]\,dr \nonumber\\
&\geq \left[\frac{\left(\frac{n}{2}-1\right)^{\frac{n}{2}}}{\Gamma\left(\frac{n}{2}\right)} \int_{1-\epsilon_n}^{1+\epsilon_n} e^{\left(\frac{n}{2}-1\right)(\ln r - r)} \,dr\right] \nonumber\\
&\qquad \qquad \qquad \times  \ln \left[1 + \left(\frac{n}{2} - 1\right)\xi^2 (1 - \epsilon_n)\right] \label{eq:lowerval} \\
& \geq \hspace{-0.1cm}  \left[\frac{\left(\frac{n}{2}-1\right)^{\frac{n}{2}} e^{-\left(\frac{n}{2}-1\right)}}{\Gamma\left(\frac{n}{2}\right)}  \hspace{-0.14cm} \int_{1-\epsilon_n}^{1+\epsilon_n} \hspace{-0.3cm} e^{-\left(\frac{n}{2}-1\right)\left[\frac{1}{2}(r - 1)^2 + \frac{2|r-1|^3}{3}\right]} dr\right]  \nonumber \\
&\qquad \qquad \qquad \times \ln \left[1 + \left(\frac{n}{2} - 1\right)\xi^2 (1 - \epsilon_n)\right] \label{eq:taylorseries} \\
&\geq \frac{\left(\frac{n}{2}-1\right)^{\frac{n}{2}} e^{-\left(\frac{n}{2}-1\right)}}{\Gamma\left(\frac{n}{2}\right)}  \sqrt{\frac{2\pi}{\frac{n}{2}-1}} e^{-\left(\frac{n}{2}-1\right) \frac{2 \epsilon_n^3}{3}}  \nonumber\\
& \left[1 - 2 Q\left(\epsilon_n \sqrt{\frac{n}{2} - 1} \right)\right]  \times \ln \left[1 + \left(\frac{n}{2} - 1\right)\xi^2 (1 - \epsilon_n)\right] \label{eq:errorterm} \\
%&\sim \frac{\left(\frac{n}{2}-1\right)^{\frac{n}{2}}}{\Gamma\left(\frac{n}{2}\right)} e^{\left(\frac{n}{2}-1\right)(\ln \tilde{r}_0 - \tilde{r}_0)} \sqrt{\frac{2\pi}{\frac{n}{2}-1}} \, \ln \left(\frac{n}{2}\, \xi^2 \, (1 + \epsilon)\right) \label{eq:laplace}\\
&\sim \ln \left(\frac{n}{2}\, \xi^2 \,\right) \label{eq:stir101},
\end{align}
where again we used Stirling's approximation of the Gamma function in order to write equation~(\ref{eq:stir101}). Equation~\eqref{eq:taylorseries} is due to the fact that $\ln(r) - r + 1 \geq -(r-1)^2/2 - 2|r-1|^3/3$ around $r = 1$ and equation~\eqref{eq:errorterm} is justified by the fact that the term $|r - 1|^3 \leq \epsilon_n^3$ whenever $|r - 1| \leq \epsilon_n$. 
%Equation~\eqref{eq:laplace} is due to Laplace's method at $\tilde{r}_0 = 1$ the point at which $(\ln r - r)$ achieves a global maximum~\cite[Eq. 3.23]{butler_2007}. 
A similar upperbound can be readily obtained by replacing $\ln \left[1 + \left(\frac{n}{2} - 1\right)\xi^2 (1 - \epsilon_n)\right]$ by $\ln \left[1 + \left(\frac{n}{2} - 1\right)\xi^2 (1 + \epsilon_n)\right]$ in equation~\eqref{eq:lowerval}; upperbounding $\ln r - r + 1 \leq -(r-1)^2/2 + |r-1|^3/3$; and
by replacing $e^{-\left(\frac{n}{2}-1\right) \frac{2\epsilon_n^3}{3}}$ by $e^{\left(\frac{n}{2}-1\right) \frac{\epsilon_n^3}{3}}$ in equation~\eqref{eq:errorterm}. This implies that $I_{2n} \sim \ln \left( \frac{n}{2}\, \xi^2 \right)$ which in turn implies by virtue of equations~(\ref{eq:i1n}) and~(\ref{eq:i3n}) that 
\begin{equation}
\label{eq:in}
\frac{n + 1}{2} \, I_{n} \sim \frac{n + 1}{2} \ln \left(\frac{n}{2}\, \xi^2 \,\right) + o(\ln n).
\end{equation}

Using Stirling's approximation in addition to~\eqref{eq:in}, equation~(\ref{eq:dpnqn}) gives
\begin{align*}
&D(p_G\|p_C) = -\frac{n}{2} \ln (\pi e \xi^2) - \ln \frac{\Gamma\left(\frac{n+1}{2}\right)}{\pi^{\frac{n+1}{2}}} +  \frac{n + 1}{2}\, I_n \\
&\sim -\frac{n}{2} \ln (\pi e \xi^2) - \frac{n - 1}{2}\ln \left(\frac{n - 1}{2}\right) + \frac{n - 1}{2} \\
& - \frac{1}{2}\ln \left(\frac{n - 1}{2}\right) + \frac{n + 1}{2} \ln \pi  +\frac{n + 1}{2} \ln \left(\frac{n}{2}\, \xi^2 \,\right) + o(\ln n) \\
%&= -\frac{n}{2}\ln (e \xi^2) - \frac{n+1}{2}\ln \left(\frac{n+1}{2}\right) + \frac{n+1}{2} + \frac{1}{2}\ln \left(\frac{n+1}{2}\right) -\frac{1}{2}\ln 2\nonumber\\
%&\qquad - \theta\left(\frac{1}{n}\right) + \frac{n+1}{2 \Gamma\left(\frac{n}{2}\right)} \int_{0}^{\infty} e^{-r} \ln\left(1 + \xi^2r\right) r^{\frac{n}{2} - 1}\,dr \label{eq:dpq},
&\sim \frac{1}{2} \ln n.
\end{align*}

\section{Correlated Hypotheses; Examples}
\label{app:corr}

\subsection{Example 1}

Let $0 < \alpha < 1$. Consider the following binary hypothesis testing problem
\begin{eqnarray*}
     \left\{ \begin{array}{ll}
        \displaystyle  H_0: & \{X_i\}_{i=1}^{n} \text{ IID } \sim \mathcal{B}\left(\frac{1}{2}\right) \vspace{5pt}  \\
        %\displaystyle - \frac{\Gamma(2-\alpha)}{(\alpha-1)} (c_1+c_2)\cos\left(\frac{\pi}{2}\alpha\right)  &  1<\alpha<2\\
        \displaystyle  H_1: &  \begin{cases} X_1 \sim \mathcal{B}\left(\frac{1}{2}\right) \,\, \& \,\, X_1 = \cdots = X_n & \text{w.p.} \,\, \alpha\\ \{X_i\}_{i=1}^{n} \text{ IID } \sim \mathcal{B}\left(\frac{1}{2}\right) & \text{w.p.} \,\, (1- \alpha),\end{cases}
      \end{array} \right.
  \end{eqnarray*}
where $\mathcal{B}(p)$ is the Bernoulli variable with parameter $p$. It can be readily dervied that the Maximum Likelihood (ML) device --optimal in a Bayesian setting with equal a priori's-- decides on $H_1$ whenever all $X_i$'s are equal and on $H_0$ otherwise. Computing $P_F$: 
\begin{align*}
P_F = \, & \text{Pr}\left(\hat{H} = H_1| H_0\right) \nonumber\\
= \, & \text{Pr}\left((\underset{n \,\, \text{times}}{\underbrace{0,0, \cdots,0}}) \cup (\underset{n \,\, \text{times}}{\underbrace{1,1, \cdots,1}}) | H_0\right) = \left(\frac{1}{2}\right)^{n-1},
\end{align*}
which is exponentially decaying in $n$. As for $P_D$: 
\begin{align*}
P_D = \, & \text{Pr}\left(\hat{H} = H_1| H_1\right) \nonumber\\ 
= \, & \text{Pr}\left((\underset{n \,\, \text{times}}{\underbrace{0,0, \cdots,0}}) \cup (\underset{n \,\, \text{times}}{\underbrace{1,1, \cdots,1}}) | H_1\right)\\
= \, & 2 \left(\frac{\alpha}{2} + (1 - \alpha)\left(\frac{1}{2}\right)^n\right), 
\end{align*}
which implies that 
\begin{equation*}
1 - P_D = (1 - \alpha) \left(1 - \left(\frac{1}{2}\right)^{n-1}\right),
\end{equation*}
converging to a positive constant as $n$ goes to infinity and is clearly different in behavior from $P_F$ at large values of $n$.

%\begin{equation*}
%p(\underset{n \text{times}}{\underbrace{0,0, \cdots,0}}) = p(\underset{n \text{times}}{\underbrace{1,1, \cdots,1}}) = \frac{1}{2}\\
%\end{equation*}
\subsection{Example 2: The Beta-Binomial Distribution}
Let 
 %where, given a (large) number $n > 0$ of observations, we need to decide between the two hypotheses:
 
\begin{eqnarray*}
     \left\{ \begin{array}{ll}
        \displaystyle  H_0: & \{X_i\}_{i=1}^{n} \text{ IID } \sim \mathcal{B}\left(\frac{1}{2}\right) \vspace{5pt} \\ 
        %\displaystyle - \frac{\Gamma(2-\alpha)}{(\alpha-1)} (c_1+c_2)\cos\left(\frac{\pi}{2}\alpha\right)  &  1<\alpha<2\\
        \displaystyle  H_1: &  \mathcal{BB}(n,1,1), \quad \text{correlated} 
      \end{array} \right.
  \end{eqnarray*}
  where $\mathcal{BB}(n,\alpha,\beta)$ is the Beta-Binomial distribution with parameters $n \in \Naturals$, and $\alpha, \beta > 0$~\cite{Hisakado06,Nicola1990}. Under the $\mathcal{BB}(n,\alpha,\beta)$ distribution, the probability of having $i$ zeros among length-$n$ sequences is equal to: 
  \begin{equation}
  \label{eq:bbpmf}
  b_n(i) = \begin{pmatrix} n \\ i \end{pmatrix} \frac{B(\alpha + i, n + \beta - i)}{B(\alpha,\beta)},
  \end{equation}
  where $B(\alpha,\beta)$ is the Beta function given in terms of the Gamma function as follows: 
  \begin{equation*}
  B(\alpha,\beta) = \frac{\Gamma(\alpha)\Gamma(\beta)}{\Gamma(\alpha + \beta)}. 
\end{equation*}
In out case $\alpha = \beta = 1$, and $b_n(i)$ evaluates to: 
\begin{align}
b_n(i) & = \begin{pmatrix} n \\ i \end{pmatrix} \frac{B(1 + i, n + 1 - i)}{B(1,1)} = \begin{pmatrix} n \\ i \end{pmatrix} \frac{i! \, (n - i)!}{(n + 1)!} \nonumber\\
&= \frac{1}{n + 1}, \quad 0 \leq i \leq n. \label{eq:uni}
\end{align}
  Under equal apriori probabilities $p_0 = p_1 = \frac{1}{2}$, the MAP detector is a ML one and will choose hypothesis $H_1$ whenever the length $n$ sequence is more probable under $H_1$ than under $H_0$. Let $i$ be the number of zeros in a length-$n$ sequence. Then, $\hat{H} = H_1$ whenever: 
  \begin{align}
  &\frac{B\left(1 + i, n + 1 - i\right)}{B\left(1,1\right)} \geq \left(\frac{1}{2}\right)^n \nonumber\\
  \iff & \frac{i! \, (n - i)!}{(n + 1)!} \geq \left(\frac{1}{2}\right)^n \nonumber\\
  \iff & \begin{pmatrix} n \\ i \end{pmatrix} \leq \frac{2^n}{n + 1}\label{eq:condition}.
  %\iff & (2i - 1)!! \, (2(n - i) - 1)!! \geq n!. %\label{eq:condition},
  \end{align}
We note that whenever that there exists an index $i$ satisfying equation~(\ref{eq:condition}), then $(n - i)$ will also satisfy~(\ref{eq:condition}). For the range $\lceil \frac{n}{2} \rceil \leq i \leq n$, let $i_0(n)$ be the smallest integer for which~(\ref{eq:condition}) is satisfied. For large $n$, we claim that: 
\begin{equation}
\frac{i_0(n) - \frac{n}{2}}{\sqrt{n}} \sim  c \, \sqrt{\ln n}
%i^*(n) \eqdef \left \lceil \frac{n}{2} + c \sqrt{n \ln n} \right \rceil,
\label{eq:limitindex}
\end{equation}
for $c = \frac{1}{2}$. Define $i^*(n) = \left \lceil \frac{n}{2} + c \sqrt{n \ln n} \right \rceil$. Since, for any $c > 0$, we have $|\frac{n}{2} - i^*(n)| = o(n^{\frac{2}{3}})$, then using Stirling's approximation for the binomial coefficient for large $n$ and large $i$~\cite{spencer2014}, we have: 
\begin{align}
\begin{pmatrix} n \\ i^*(n) \end{pmatrix} &\sim \frac{2^n}{\sqrt{\frac{\pi}{2}\,n}} e^{- \frac{(n - 2i^*(n))^2}{2n}} \nonumber\\
&\sim \sqrt{\frac{2}{\pi}}\frac{2^n}{n^{\frac{1}{2} + 2c^2}} \label{eq:bdlimindex}.
%&\leq \frac{2^n}{n + 1},
\end{align}
Equation~(\ref{eq:bdlimindex}) violates condition~(\ref{eq:condition}) whenever $c < \frac{1}{2}$ and satisfies it whenever $c \geq \frac{1}{2}$ thus proving our claim. For large $n$, we have
  \begin{align}
&P_F = \text{Pr}\left(\hat{H} = H_1| H_0\right) \nonumber\\
%& = \left(\frac{1}{2}\right)^n \sum_{i \in (\ref{eq:condition})} \begin{pmatrix} n \\ i \end{pmatrix} \nonumber\\
& =  \left(\frac{1}{2}\right)^n 2 \sum_{i_0(n)}^n \begin{pmatrix} n \\ i \end{pmatrix} \nonumber\\
& = \left(\frac{1}{2}\right)^n \left(2^n -  \sum_{n - i_0(n) + 1}^{i_0(n)-1}\begin{pmatrix} n \\ i \end{pmatrix}\right) \nonumber\\
& \sim 1 - \sqrt{\frac{2}{\pi}} \frac{1}{\sqrt{n}} \sum_{n - i_0(n) + 1}^{i_0(n)-1} e^{-\frac{(n - 2i)^2}{2n}}  \label{eq:approx11}\\
& \sim 1   - \frac{2}{\sqrt{n}}\frac{1}{\sqrt{2 \pi}} \int_{n - i_0(n) + 1}^{i_0(n)-1} e^{-\frac{(n - 2t)^2}{2n}} \, dt \label{eq:serint}\\
&= 1  -  \left[Q\left(\frac{2i_0(n) - n - 2}{\sqrt{n}}\right) - Q\left(\frac{- 2i_0(n) + n  + 2}{\sqrt{n}}\right)\right] \label{eq:chov}\\
&= 2 Q\left(\frac{2i_0(n) - n - 2}{\sqrt{n}}\right) \nonumber\\
&\sim \sqrt{\frac{2}{\pi}} \frac{e^{-\frac{(2i_0(n) - n - 2)^2}{2 n}}}{\frac{2i_0(n) - n - 2}{\sqrt{n}}} \label{eq:Qapproxappendix}\\
%&\sim \sqrt{\frac{2}{\pi}} \frac{e^{-\frac{(2i^*(n) - n - 2)^2}{2 n}}}{\frac{2i^*(n) - n - 2}{\sqrt{n}}} \label{eq:uselimitindex}\\
&\sim \sqrt{\frac{2}{\pi}} \frac{1}{\sqrt{n \ln n}} \nonumber %\label{eq:Qapp}
%& \geq \left(1  - \frac{1}{\sqrt{n}} \sqrt{\frac{2}{\pi}} -  \frac{4}{\sqrt{n}}\frac{1}{\sqrt{2 \pi }} \sum_{0}^{+\infty} e^{-\frac{i^2}{2}}\right) 
%&= \left(1  - \sqrt{\frac{2}{\pi}} \frac{1}{\sqrt{n}} - 4 (Q(1/\sqrt{n}) - Q(2 \lceil \sqrt{\ln n} \rceil- 2/\sqrt{n}))\right)
\end{align}
where we used Stirling's approximation to write equation~(\ref{eq:approx11}). Equation~(\ref{eq:serint}) is valid since $e^{-\frac{(n - 2t)^2}{2n}} \sim \frac{1}{\sqrt{n}}$ is convex for $n - i_0(n) + 1 \leq t \leq i_0(n)-1$, thus implying: 
\begin{align*}
 &\int_{n - i_0(n) + 1}^{i_0(n)-1} e^{-\frac{(n - 2t)^2}{2n}} \, dt \\
 &\leq \sum_{n - i_0(n) + 1}^{i_0(n)-1} e^{-\frac{(n - 2i)^2}{2n}} \leq \int_{n - i_0(n)}^{i_0(n)-1} e^{-\frac{(n - 2t)^2}{2n}} \, dt,
\end{align*}
with both the upper and lower bounds having the same asymptotic behavior. Equation~(\ref{eq:chov}) is due to the change of variable $u = \frac{2t - n}{\sqrt{n}}$ and equation~(\ref{eq:Qapproxappendix}) is justified by using the approximation in~(\ref{eq:Qapproxneg}). The last equation is justified by equation~(\ref{eq:limitindex}). As for $1 - P_D$, we have
\begin{eqnarray}
1 - P_D &=& \text{Pr}\left(\hat{H} = H_0| H_1\right) \nonumber\\
&=&  \sum_{n - i_0(n) + 1}^{i_0(n)-1} \, b_n(i)\nonumber\\
&=& \frac{1}{n + 1} \, (2i_0(n) - n - 2), \nonumber\\
&\sim& \sqrt{\frac{\ln n}{n}}
\end{eqnarray}
where $b_n(i)$ is given in equation~(\ref{eq:uni}) and where we used equation~(\ref{eq:limitindex}) to write the last equation. Once more, $P_F$ and $1 - P_D$ exhibit different asymptotic behaviors. 
%At large values of $n \rightarrow \infty$ (even), we get: 

 %\begin{eqnarray*}
%P_F &\rightarrow& 1 \\
%1 - P_D &\rightarrow& 0.
 %\end{eqnarray*}

 %To write equation~(\ref{eq:approx10}) we use Stirling's approximation. Equation~(\ref{eq:nonsymresult}) implies that 
 %\begin{equation*}
 %\lim_{n \rightarrow \infty, \, n \,\text{even}} \frac{P_F}{1 - P_D} = 0
 %\end{equation*}
 %This example shows that $P_F$ and $1 - P_D$ do not have the same rate for large values of $n$. 
%%%%%%%%%%%%%%%%%%%%%%%%%%%%%%%%%%%%%%%%%%%%%%%%%%%%%%%%%%%%%%%%%%%%%%%%%%%%%%%%

\subsection{Example 3: Continuous Laws}
Given $n \in \mathbb{N}$, $n \geq 3$, consider the following PDF over $[0,1]^n$:
\begin{align*} 
    p_n(x^n) = \begin{cases}
    n - \sqrt{n}, & \text{if } \| x^n \|_\infty \leq n^{-1/n}, \vspace{5pt} \\
    \frac{\sqrt{n}}{n-1}, & \text{otherwise},
    \end{cases}
\end{align*}
where $\|x^n \|_\infty = \max_i |x_i|$. It can be easily verified that $p_n$ is a valid PDF by noting that the volume of the set $\{x^n: \| x^n \|_\infty \leq n^{-1/n} \}$ is $\frac{1}{n}$.

Now, consider the following hypothesis testing problem:
\begin{eqnarray*}
     \left\{ \begin{array}{ll}
        \displaystyle  H_0: \{X\}_{i=1}^n \text{ IID } \sim \mathcal{U}[0,1], \\ 
        \displaystyle  H_1: X^n \sim p_n(\cdot),
      \end{array} \right.
  \end{eqnarray*}
  where $\mathcal{U}[0,1]$ is the uniform law on the interval $[0,1]$.

  Assuming a prior uniform over $H_0$ and $H_1$, the optimal ML decision rule yields
  \begin{align*}
       \| x^n \|_\infty  \underset{H_0}{\overset{H_1}{\lesseqgtr}}  n^{-1/n}.
  \end{align*}
As such,
\begin{align*}
    P_F & = \Pr ( \hat{H} = H_1 | H_0)
    = \Pr \Big(  \| X^n \|_\infty \leq n^{-1/n} | H_0 \Big) \\
    & = \frac{1}{n}.
\end{align*}
On the other hand,
\begin{align*}
    1- P_D & = \Pr ( \hat{H} = H_0 | H_1) \\
    & = \Pr (  \| X^n \|_\infty > n^{-1/n} | H_1) \\
    & = \int_{\{\|x^n\|_\infty > n^{-1/n}\}} \frac{\sqrt{n}}{n-1} \, d x^n \\
    & = \frac{\sqrt{n}}{n-1} \left( 1- \int_{\{\|x^n\|_\infty \leq n^{-1/n}\}} dx^n \right) \\
    & = \frac{\sqrt{n}}{n-1} \left( 1- \frac{1}{n} \right) \\
    & = \frac{1}{\sqrt{n}},
\end{align*}
and the asymptotic behavior is clearly different from that of $P_F$.

\section{$(1 - P_D)$}
\label{app:proofeq}
In this appendix, we show the following technical result. We have 
\begin{align}
&1- P_D \nonumber\\
&\geq \frac{1}{\sqrt{\pi}} \frac{\Gamma\!\left(\frac{n+1}{2}\right)}{  \Gamma\!\left(\frac{n}{2}\right)} \sqrt{\frac{2}{n+1}} \, \left(\Gamma\!\left(\frac{1}{2}, \frac{n+1}{2r_2^2}\right) \hspace{-0.003cm}- \hspace{-0.003cm} \Gamma\!\left(\frac{1}{2}, \frac{n+1}{2r_1^2}\right)\right).  \label{eq:uppbdapp} 
\end{align}
Furthermore, under Regime (R-i), the following holds: 
\begin{align}
&1- P_D \nonumber\\
&\sim \frac{1}{\sqrt{\pi}} \frac{\Gamma\!\left(\frac{n+1}{2}\right)}{  \Gamma\!\left(\frac{n}{2}\right)} \sqrt{\frac{2}{n+1}} \, \left(\Gamma\!\left(\frac{1}{2}, \frac{n+1}{2r_2^2}\right) \hspace{-0.003cm}- \hspace{-0.003cm}  \Gamma\!\left(\frac{1}{2}, \frac{n+1}{2r_1^2}\right)\right). \label{eq:bdR2fixed} 
\end{align}  
To show~\eqref{eq:uppbdapp}, we use the expression of $1 - P_D$ as given by equation~(\ref{eq:1-pd}) and the fact that that $\left(1 + u\right)^{-\frac{n + 1}{2}} \geq e^{-\frac{n + 1}{2} u}$ for $u \geq 0$, and we write: 
\begin{align}
 &1- P_D \nonumber\\
 &= \frac{1}{\sqrt{\pi}} \frac{\Gamma\!\left(\frac{n+1}{2}\right)}{  \Gamma\!\left(\frac{n}{2}\right)}\int_{\frac{1}{r_2^2}}^{\frac{1}{r_1^2}} \left(1 + u\right)^{-\frac{n + 1}{2}} u^{-\frac{1}{2}}\,du \nonumber\\
 &\geq \frac{1}{\sqrt{\pi}} \frac{\Gamma\!\left(\frac{n+1}{2}\right)}{  \Gamma\!\left(\frac{n}{2}\right)}\int_{\frac{1}{r_2^2}}^{\frac{1}{r_1^2}} e^{-\frac{n + 1}{2} u} u^{-\frac{1}{2}}\,du  \nonumber \\
 &= \frac{1}{\sqrt{\pi}} \frac{\Gamma\!\left(\frac{n+1}{2}\right)}{  \Gamma\!\left(\frac{n}{2}\right)} \sqrt{\frac{2}{n+1}} \, \left(\Gamma\!\left(\frac{1}{2}, \frac{n+1}{2r_2^2}\right) - \Gamma\!\left(\frac{1}{2}, \frac{n+1}{2r_1^2}\right)\right) \nonumber
\end{align}
where the last equation is due to the definition of the upper incomplete Gamma function. 
%This implies that 
%\begin{align*}
%\frac{\frac{1}{\sqrt{\pi}} \frac{\Gamma\left(\frac{n+1}{2}\right)}{  \Gamma\left(\frac{n}{2}\right)}\int_{\frac{1}{r_2^2}}^{\frac{1}{r_1^2}} \left(1 + u\right)^{-\frac{n + 1}{2}} u^{-\frac{1}{2}}\,du}{\frac{1}{\sqrt{\pi}} \frac{\Gamma\left(\frac{n+1}{2}\right)}{  \Gamma\left(\frac{n}{2}\right)}\int_{\frac{1}{r_2^2}}^{\frac{1}{r_1^2}} e^{-\frac{n + 1}{2} u} u^{-\frac{1}{2}}\,du} \geq 1. % \label{eq:upp1}
%\end{align*}
To further show~(\ref{eq:bdR2fixed}), we note that under Regime (R-i) the following holds: 
\begin{align}
&\lim_{n \rightarrow \infty} \frac{\frac{1}{\sqrt{\pi}} \frac{\Gamma\left(\frac{n+1}{2}\right)}{  \Gamma\left(\frac{n}{2}\right)}\int_{\frac{1}{r_2^2}}^{\frac{1}{r_1^2}} \left(1 + u\right)^{-\frac{n + 1}{2}} u^{-\frac{1}{2}}\,du}{\frac{1}{\sqrt{\pi}} \frac{\Gamma\left(\frac{n+1}{2}\right)}{  \Gamma\left(\frac{n}{2}\right)}\int_{\frac{1}{r_2^2}}^{\frac{1}{r_1^2}} e^{-\frac{n + 1}{2} u} u^{-\frac{1}{2}}\,du} \nonumber\\
&= \lim_{n \rightarrow \infty} \frac{\int_{\frac{1}{r_2^2}}^{\frac{1}{r_1^2}} e^{-\frac{n + 1}{2} \left[\ln(1 + u) - u\right]} e^{-\frac{n + 1}{2} u}u^{-\frac{1}{2}}\,du}{\int_{\frac{1}{r_2^2}}^{\frac{1}{r_1^2}} e^{-\frac{n + 1}{2} u} u^{-\frac{1}{2}}\,du}\nonumber\\
&\leq \lim_{n \rightarrow \infty} e^{-\frac{n + 1}{2} \left[\ln\left(1 + \frac{1}{r_1^2}\right) - \frac{1}{r_1^2}\right]} \label{eq:nonincreasing}\\
&= \left[ \lim_{n \rightarrow \infty} e^{-\frac{n + 1}{2} \ln\left(1 + \frac{1}{r_1^2}\right)} \right] \left[ \lim_{n \rightarrow \infty} e^{\frac{n + 1}{2} \frac{1}{r_1^2} }  \right]\nonumber\\
&= e^{-\frac{1}{\xi^2}} e^{\frac{1}{\xi^2}} = 1, \label{eq:r2d2}
\end{align}
where  in order to write equation~\eqref{eq:r2d2}, we used the fact that $r_1^2$ behaves according to equation~\eqref{eq:tailR2r2rmax} under Regime (R-i). Equation~\eqref{eq:nonincreasing} holds true since $\ln(1 + u) - u$ is non-increasing for $u \geq 0$.

\end{document}